\newcommand{\eps}{\epsilon}
\newcommand{\E}{\mathbf{E}}
\renewcommand{\Pr}{\mathbf{Pr}}
\newcommand{\abs}[1]{\left| #1 \right|}
\newcommand{\opt}{{\sf OPT}^{\sf med}}
\newcommand{\optm}{{\sf OPT}^{\sf mea}}
\newcommand{\sol}{{\sf SOL}^{\sf med}}
\newcommand{\loss}{\text{loss}}
\newcommand{\G}{\mathcal{P}}
\newcommand{\zzp}{\mathbb{Z}^+}
\newcommand{\rrp}{\mathbb{R}^+}
\newcommand{\summary}{\textbf{Summary-Outliers}}
\newtheorem{theorem}{Theorem}
\newtheorem{lemma}{Lemma}
\newtheorem{claim}{Claim}
\newtheorem{fact}{Fact}
\newtheorem{corollary}{Corollary}
\newtheorem{definition}{Definition}
\newenvironment{proof}{\trivlist\item[]\emph{Proof:}}%
{\unskip\nobreak\hskip 1em plus 1fil\nobreak$\Box$
\parfillskip=0pt%
\endtrivlist}
\newcommand{\kmeansmm}{{\tt $k$-means{-}{-}}}
\newcommand{\kmeanspp}{{\tt $k$-means{+}{+}}}
\newcommand{\gauss}{{\tt gauss}}
\newcommand{\susy}{{\tt  susy}}
\newcommand{\spatial}{{\tt Spatial}}
\newcommand{\kddfull}{{\tt kddFull}}
\newcommand{\kddsample}{{\tt kddSp}}
\newcommand{\lone}{{\tt $\ell_1$-loss}}
\newcommand{\ltwo}{{\tt $\ell_2$-loss}}
\newcommand{\summarysize}{{\tt \small{summarySize}}}
\newcommand{\prerecall}{{\tt preRec}}
\newcommand{\recall}{{\tt recall}}
\newcommand{\precision}{{\tt prec}}
\newcommand{\bg}{{\tt ball-grow}}
\newcommand{\rand}{{\tt rand}}
\newcommand{\kmeanspar}{{\tt $k$-means\textbardbl}}
\newcommand{\qinhides}[1]{}
\DeclareMathOperator{\argmax}{argmax} 
\title{A Practical Algorithm for Distributed Clustering and Outlier Detection\ifdefined}
\author{
  Jiecao Chen \\
  Indiana University Bloomington\\
  Bloomington, IN \\
  \texttt{jiecchen@indiana.edu} \\
  \And
  Erfan Sadeqi Azer \\
  Indiana University Bloomington\\
  Bloomington, IN \\
  \texttt{esadeqia@indiana.edu}\\
  \AND
  Qin Zhang\\
  Indiana University Bloomington\\
  Bloomington, IN \\
  \texttt{qzhangcs@indiana.edu} \\
}
\begin{document}
\maketitle

\begin{abstract}
We study the classic $k$-means/median clustering, which are fundamental problems in unsupervised learning, in the setting where data are partitioned across multiple sites, and where we are allowed to discard a small portion of the data by labeling them as outliers.  We propose a simple approach based on constructing small summary for the original dataset. The proposed method is time and communication efficient, has good approximation guarantees, and can identify the global outliers effectively.  
To the best of our knowledge, this is the first practical algorithm with theoretical guarantees for distributed clustering with outliers.  
Our experiments on both real and synthetic data have demonstrated the clear superiority of our algorithm against all the baseline algorithms in almost all metrics.
\end{abstract}

\section{Introduction}
\label{sec:intro}

The rise of big data has brought the design of distributed learning algorithm to the forefront.  For example, in many practical settings the large quantities of data are collected and stored at different locations, while we want to learn properties of the union of the data.  For many machine learning tasks, in order to speed up the computation we need to partition the data into a number of machines for a joint computation.  In a different dimension, since real-world data often contain background noise or extreme values, it is desirable for us to perform the computation on the ``clean data'' by discarding a small portion of the data from the input.  Sometimes these outliers are interesting by themselves; for example, in the study of statistical data of a population, outliers may represent those people who deserve special attention.  In this paper we study {\em clustering with outliers}, a fundamental problem in unsupervised learning,   in the distributed model where data are partitioned across multiple sites, who need to communicate to arrive at a consensus on the cluster centers 
and labeling of outliers. 

For many clustering applications it is common to model data objects as points in $\mathbb{R}^d$, and the similarity between two objects is represented as the Euclidean distance of the two corresponding points.  In this paper
we assume for simplicity that each point can be sent by {\em one unit} of communication. Note that when $d$ is large, we can apply standard dimension reduction tools (for example, the Johnson-Lindenstrauss lemma) before running our algorithms.

We focus on the two well-studied objective functions $(k,t)$-means and $(k,t)$-median, defined in Definition \ref{def:clustering}. It is worthwhile to mention that our algorithms also work for other metrics as long as the distance oracles are given.
\begin{definition}[$(k,t)$-means/median]  
\label{def:clustering}
Let $X$ be a set of points, and $k, t$ be two parameters.  For the $(k,t)$-median problem we aim for computing a set of centers $C \subseteq \mathbb{R}^d$ of size at most $k$ and a set of outliers $O \subseteq X$ of size at most $t$ so that the objective function $\sum_{p\in X\backslash O} d(p, C)$  is minimized.  
For the $(k,t)$-means we simply replace the objective function with $\sum_{p\in X\backslash O} d^2(p, C)$.  
\end{definition}

{\bf Computation Model.}  
We study the clustering problems in the {\em coordinator model}, a well-adopted model for distributed learning~\cite{BEL13,CSWZ16,GYZ17,DG0N17}.  In this model we have $s$ sites and a central coordinator; each site can communicate with the coordinator.  The input data points are partitioned among the $s$ sites, who, together with the coordinator, want to jointly compute some function on the global data.   The data partition can be either adversarial or random. The former can model the case where the data points are independently collected at different locations, while the latter is common in the scenario where the system uses a dispatcher to randomly partition the incoming data stream into multiple workers/sites for a parallel processing (and then aggregates the information at a central server/coordinator). 

In this paper we focus on the one-round communication model (also called the {\em simultaneous communication} model), where each site sends a sketch of its local dataset to the coordinator, and then the coordinator merges these sketches and extracts the answer.  This model is arguably the most practical one since multi-round communication will cost a large system overhead.  

Our goals for computing $(k,t)$-means/median in the coordinator model are the following: (1) to minimize the clustering objective functions; (2) to accurately identify the set of global outliers; and (3) to minimize the computation time and the communication cost of the system.  We will elaborate on how to quantify the quality of outlier detection in Section~\ref{sec:exp}.

{\bf Our Contributions.}
A natural way of performing distributed clustering in the simultaneous communication model is to use the two-level clustering framework (see e.g., \cite{GMMMO03,GYZ17}).  In this framework each site performs the first level clustering on its local dataset $X$, getting a subset $X' \subseteq X$ with each point being assigned a weight; we call $X'$ the {\em summary} of $X$. 
The site then sends $X'$ to the coordinator, and the coordinator performs the second level clustering on the union of the $s$ summaries.  We note that the second level clustering is required to output at most $k$ centers and $t$ outliers, while the summary returned by the first level clustering can possibly have more than $(k + t)$ weighted points.  The size of the summary will contribute to the communication cost as well as the running time of the second level clustering.   

The main contribution of this paper is to propose a simple and practical summary construction at sites with the following properties.  
\begin{enumerate}
\item It is extremely fast: runs in time $O(\max\{k, \log n\} \cdot n)$, where $n$ is the size of the dataset.

\item The summary has small size: $O(k\log n + t)$ for adversarial data partition and $O(k \log n + t/s)$ for random data partition.

\item When coupled with a second level (centralized) clustering algorithm that $\gamma$-approximates $(k,t)$-means/median, we obtain an $O(\gamma)$-approximation algorithm for distributed $(k,t)$-means/median.\footnote{We say an algorithm $\gamma$-approximates a problem if it outputs a solution that is at most $\gamma$ times the optimal solution.}

\item It can be used to effectively identify the global outliers.
\end{enumerate}
We emphasize that both the first and the second properties are essential to make the distributed clustering algorithm scalable on large datasets. 
Our extensive set of experiments have demonstrated the clear superiority of our algorithm against all the baseline algorithms in almost {\em all} metrics.

To the best of our knowledge, this is the first practical algorithm with theoretical guarantees for distributed clustering with outliers.

{\bf Related Work.}
Clustering is a fundamental problem in computer science and has been studied for more than fifty years.  A comprehensive review of the work on $k$-means/median is beyond the scope of this paper, and we will focus on the literature for centralized/distributed $k$-means/median clustering {\em with} outliers and distributed $k$-means/median clustering.

In the centralized setting, several $O(1)$-approximation or $(O(1), O(1))$-approximation\footnote{We say a solution is an $(a, b)$-approximation if the cost of the solution is $a \cdot C$ while excluding $b \cdot t$ points, where $C$ is the cost of the optimal solution excluding $t$ points.} algorithms have been proposed \cite{CKMN01,C09}. These algorithms make use of linear programming and need time at least $\Omega(n^3)$, which is prohibitive on large datasets.  \citet{FS12} studied $(k,t)$-median via {\em coresets}, but the running times of their algorithm includes a term $O(n(k+t)^{k+t}))$ which is not practical.

\citet{CG13} proposed for $(k,t)$-means an algorithm called \kmeansmm, which is an iterative procedure and can be viewed as a generalization of Llyod's algorithm~\cite{L82}.  Like Llyod's algorithm, the centers that \kmeansmm\ outputs are not the original input points; we thus cannot use it for the summary construction in the first level clustering at sites because some of the points in the summary will be the outliers we report at the end.  
However, we have found that \kmeansmm\ is a good choice for the second level clustering: it outputs exactly $k$ centers and $t$ outliers, and its clustering quality looks decent on datasets that we have tested, though it does not have any worst case theoretical guarantees. 


Recently \citet{GKL+17} proposed a local-search based $(O(1), O(k \log(n))$-approximation algorithm for $(k,t)$-means. The running time of their algorithm is $\tilde{O}(k^2 n^2)$,\footnote{$\tilde{O}(\cdot)$ hides some logarithmic factors.}  which is again not quite scalable. The authors mentioned that one can use the \kmeanspp\ algorithm \cite{AV07} as a seeding step to boost the running time to $\tilde{O}(k^2(k+t)^2 + nt)$. We note that first, this running time is still worse than ours.  And second, since in the first level clustering we only need a summary -- all that we need is a set of weighted points that can be fed into the second level clustering at the coordinator, we can in fact directly use \kmeanspp\ with a budget of $O(k\log n + t)$ centers for constructing a summary.  
We will use this approach as a baseline algorithm in our experimental studies.

In the past few years there has been a growing interest in studying $k$-means/median clustering in the distributed models~\cite{EIM11,BMV+12,BEL13,LBKW14,CEMMP15,CSWZ16}.  In the case of allowing outliers, Guha et al. \cite{GYZ17} gave a first theoretical study for distributed $(k,t)$-means/median. However, their algorithms need $\Theta(n^2)$ running time at sites and are thus again not quite practical on large-scale datasets.  
We note that the \kmeanspar\ algorithm proposed by \citet{BMV+12} can be 
extended (again by increasing the budget of centers from $k$ to $O(k\log n+t)$) and used as a baseline algorithm for comparison. The main issue with  \kmeanspar\ is that it needs $O(\log n)$ rounds of communication which holds back its overall performance.

\ifdefined\submission
\else

\section{Preliminaries}
\label{sec:prelim}

We are going to use the notations listed in Table~\ref{tab:notation}.
\begin{table}[!htb]
\centering
\begin{tabular}{|c|c|c|c|} 
\hline
$X$ & input dataset & $n$ & $n = \abs{X}$, size of the dataset\\
\hline
$k$ & number of centers & $\kappa$ & $\kappa = \max\{k, \log n\}$ \\
\hline
$t$ & number of outliers & $O^*$ & outliers chosen by OPT \\
\hline
$\sigma$ & clustering mapping $\sigma : X \to X$ & $d(y, X)$ & $d(y, X) = \min_{x \in X} d(y, x)$ \\ 
\hline
$\phi_X(\sigma)$ & $\phi_X(\sigma) = \sum_{x \in X} d(x, \sigma(x))$ &
$\phi(X,Y)$ & $\phi(X,Y) = \sum_{y \in Y} d(y, X)$ \\
\hline
$B(S, X, \rho)$ & $=\{x \in X|d(x, S) \le \rho\}$ & 
$r$  & $\#$ of iterations in Algo~\ref{alg:summary} \\
\hline
$X_i$ & remaining points at the $i$-th \\ & iteration  of Algorithm~\ref{alg:summary} &
$W_i$ & $X_i\backslash O^*$ \\                                       
\hline
$C_i$ & clustered points at the $i$-th \\ & iteration of Algorithm~\ref{alg:summary} &
$D_i$ & $C_i\backslash O^*$ \\
\hline
  $\opt_{k,t}(X)$ &  $\min\limits_{\substack{O \subseteq X, \abs{C} \le k \\ \abs{O} \le t}}\sum\limits_{p\in X\backslash O} d(p, C)$ &
$\optm_{k,t}(X)$ & $\min\limits_{\substack{O \subseteq X, \abs{C} \le k\\ \abs{O} \le t}} \sum\limits_{p\in X\backslash O} d^2(p, C)$ \\
\hline
\end{tabular}
\caption{List of Notations}
\label{tab:notation}
\end{table}

We will also make use of the following lemmas.

\begin{lemma}[Chernoff Bound]
\label{lem:Chernoff}
Let $X_1, \ldots, X_n$ be independent Bernoulli random variables such that $\Pr[X_i = 1] = p_i$. Let $X = \sum_{i \in [n]} X_i$, and let $\mu = \E[X]$. It holds that $\Pr[X \ge (1+\delta)\mu] \le e^{-\delta^2\mu/3}$ and $\Pr[X \le (1-\delta)\mu] \le e^{-\delta^2\mu/2}$ for any $\delta \in (0,1)$.
\end{lemma}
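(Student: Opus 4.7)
The plan is to prove both tails by the standard exponential-moment (Chernoff) method, deriving the raw multiplicative bound first and then simplifying via the usual elementary inequalities to obtain the cleaner $e^{-\delta^2 \mu / 3}$ and $e^{-\delta^2 \mu / 2}$ forms.

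First I would handle the upper tail. For any $t > 0$, applying Markov's inequality to $e^{tX}$ gives
\[
\Pr[X \ge (1+\delta)\mu] \;=\; \Pr\!\left[e^{tX} \ge e^{t(1+\delta)\mu}\right] \;\le\; \frac{\E[e^{tX}]}{e^{t(1+\delta)\mu}}.
\]
Using independence of the $X_i$, the numerator factorizes as $\prod_i \E[e^{tX_i}] = \prod_i \bigl(1 + p_i(e^t - 1)\bigr)$. The inequality $1+x \le e^x$ then yields $\E[e^{tX}] \le \exp\!\bigl(\mu(e^t - 1)\bigr)$. Choosing the optimal $t = \ln(1+\delta)$ produces the classical bound $\bigl(e^\delta/(1+\delta)^{1+\delta}\bigr)^\mu$. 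To convert this to the advertised form, I would take logarithms and use $\ln(1+\delta) \ge \delta - \delta^2/2 + \delta^3/3$ style expansions; specifically the inequality $(1+\delta)\ln(1+\delta) \ge \delta + \delta^2/3$ for $\delta \in (0,1)$, which after rearranging gives exponent at most $-\delta^2 \mu/3$.

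For the lower tail the setup is symmetric but with $t < 0$: apply Markov to $e^{-tX}$ with $t > 0$, factor via independence, and again use $1+x \le e^x$ to get $\E[e^{-tX}] \le \exp\!\bigl(\mu(e^{-t}-1)\bigr)$. Optimizing with $t = -\ln(1-\delta)$ yields $\bigl(e^{-\delta}/(1-\delta)^{1-\delta}\bigr)^\mu$, and a slightly tighter Taylor estimate --- namely $(1-\delta)\ln(1-\delta) \ge -\delta + \delta^2/2$ for $\delta \in (0,1)$ --- gives the exponent bound $-\delta^2\mu/2$, explaining why the lower tail is stronger than the upper tail by the constant factor $2$ versus $3$.

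The only step that requires any real care is the final scalar inequality converting $(1+\delta)\ln(1+\delta) - \delta$ into $\delta^2/3$ (and the analogous one for $1-\delta$); both are routine one-variable calculus facts (check nonnegativity of the difference at $\delta = 0$ and monotonicity of the derivative on $(0,1)$). Since this lemma is a textbook statement cited as a tool, in the paper I would simply refer to a standard reference rather than write the computation out; the proposal above is just to justify that nothing beyond elementary moment-generating-function manipulations is needed.
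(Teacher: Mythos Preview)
Your sketch is the standard moment-generating-function proof and is correct. The paper itself does not prove this lemma at all --- it is stated in the preliminaries as a known tool and used as a black box later --- so your closing instinct (to cite a reference rather than write out the computation) is exactly what the paper does.
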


   \begin{lemma}[\citet{MP02}]
    \label{lem:bin-ball}
    Consider the classic balls and bins experiment where $b$ balls are thrown into $m$ bins, for some $b, m\in \zzp$. Also, let $w_i$ be a weight associated with the $i$-th bin, for $i\in [m]$. Assuming, the probability of each ball falling into the $i$-th bin is $\frac{w_i}{\sum_{j=1}^m w_j}$ and $b\geq m$, the following holds:
    
    For any $\eps\in \rrp$, there exists a $\gamma\in \rrp$ such that
    $$ \textstyle \Pr[\text{total weight of empty bins} > \eps \sum_{i}w_i]  \leq e^{-\gamma b}.$$
    Note that the dependence of $\gamma$ on $\epsilon$ is independent of $b$ or $m$.
  \end{lemma}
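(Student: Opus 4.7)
The plan is to recast the lemma as a Chernoff-type concentration bound for a weighted sum of (negatively associated) Bernoullis. Normalize $W = \sum_i w_i = 1$ without loss of generality, and let $Y_i$ be the indicator that bin $i$ is empty after all $b$ throws, so the total weight of empty bins equals $E = \sum_i w_i Y_i$ and the task is to bound $\Pr[E > \epsilon]$ by $e^{-\gamma(\epsilon)\, b}$.

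The first step is to control the expectation. Since $\E[Y_i] = (1-w_i)^b \le e^{-b w_i}$,
$$ \E[E] \;\le\; \sum_i w_i\, e^{-b w_i} \;\le\; \frac{m}{eb}, $$
using that the scalar function $x\mapsto x\, e^{-bx}$ is maximized at $x=1/b$ with value $1/(eb)$. Under the hypothesis $b \ge m$ this already forces $\E[E] \le 1/e$, and it shrinks linearly once $b$ grows past $m/\epsilon$.

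For concentration I would exploit the classical fact that the occupancy indicators $Y_1,\dots,Y_m$ are negatively associated in the balls-and-bins model (Joag-Dev--Proschan). This yields the product-form MGF bound
$$ \E\!\left[e^{\lambda E}\right] \;\le\; \prod_i \E\!\left[e^{\lambda w_i Y_i}\right] \;=\; \prod_i \Bigl(1 + (e^{\lambda w_i}-1)(1-w_i)^b\Bigr) \;\le\; \exp\!\Bigl(\sum_i (e^{\lambda w_i}-1)(1-w_i)^b\Bigr), $$
valid for every $\lambda>0$. Then the usual Chernoff step $\Pr[E>\epsilon] \le e^{-\lambda \epsilon}\,\E[e^{\lambda E}]$, together with a suitable choice $\lambda=\lambda(\epsilon)$, is intended to produce the claimed tail bound $e^{-\gamma(\epsilon)\, b}$.

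The main obstacle is that a single $w_i$ can be as large as $1$, which both blows up the MGF and prevents direct bounded-difference inequalities like McDiarmid. To sidestep this I would split the bins into $H=\{i : w_i \ge \epsilon/2\}$ and $L=\{i : w_i < \epsilon/2\}$. Since $|H|\le 2/\epsilon$, a union bound shows that \emph{some} heavy bin is empty with probability at most $(2/\epsilon)(1-\epsilon/2)^b \le (2/\epsilon)\, e^{-\epsilon b/2}$, which already has the form $e^{-\gamma_1(\epsilon)\, b}$; on the complementary event we may assume $E \le E_L := \sum_{i\in L} w_i Y_i$. On $L$ all weights are at most $\epsilon/2$, so the linearization $e^{\lambda w_i}-1 \le 2\lambda w_i$ (valid for $\lambda w_i$ bounded) lets the MGF analysis above go through with a rate $\gamma_2(\epsilon)>0$. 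Calibrating the heavy/light threshold and the parameter $\lambda$ so that the two rates combine into a single $\gamma(\epsilon)$ depending only on $\epsilon$ and not on $b$ or $m$ is the final bookkeeping step, carried out in the proof of \citet{MP02}.
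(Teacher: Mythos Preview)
The paper does not supply its own proof of this lemma: it appears in the preliminaries as a quoted result of \citet{MP02} and is used as a black box later (in the proof of Claim~\ref{claim:g-g}). There is therefore no in-paper argument for your attempt to be compared against; your sketch is an independent reconstruction of the cited result.

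The approach you outline---negative association of the empty-bin indicators, a heavy/light split on the weights to tame possibly large $w_i$, and a Chernoff-type MGF bound on the light part---is the standard route to weighted occupancy tail bounds of this kind and is essentially how such results are obtained in the source. One point is worth flagging, however. Your own calculation $\E[E]\le m/(eb)$ already shows that the bare hypothesis $b\ge m$ only forces $\E[E]\le 1/e$, so for any $\epsilon<1/e$ the conclusion $\Pr[E>\epsilon]\le e^{-\gamma b}$ with a fixed $\gamma>0$ independent of $b,m$ cannot hold (take $b=m$ with all $w_i=1/m$; the empty-weight fraction concentrates near $1/e$, not below $\epsilon$). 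Thus the ``final bookkeeping'' you defer cannot be completed under the hypothesis exactly as stated here. This is an imprecision in the paper's restatement rather than a flaw in your strategy: the paper's actual application throws $\gamma\kappa$ balls into at most $k\le\kappa$ bins with the constant $\gamma$ chosen sufficiently large, i.e., it operates in the regime $b\ge c(\epsilon)\,m$, and in that regime your heavy/light analysis goes through cleanly.
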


\fi

\section{The Summary Construction}
\label{sec:summary}

In this section we present our summary construction for $(k,t)$-median/means in the centralized model. In Section~\ref{sec:distributed} we will show how to use this summary construction for solving the problems in the distributed model.  
\ifdefined\submission
Table~\ref{tab:notation} is the list of  notations we are going to use.
\begin{table}[!htb]
\centering
\begin{tabular}{|c|c|c|c|} 
\hline
$X$ & input dataset & $n$ & $n = \abs{X}$, size of the dataset\\
\hline
$k$ & number of centers & $\kappa$ & $\kappa = \max\{k, \log n\}$ \\
\hline
$t$ & number of outliers & $O^*$ & outliers chosen by OPT \\
\hline
$\sigma$ & clustering mapping $\sigma : X \to X$ & $d(y, X)$ & $d(y, X) = \min_{x \in X} d(y, x)$ \\ 
\hline
$\phi_X(\sigma)$ & $\phi_X(\sigma) = \sum_{x \in X} d(x, \sigma(x))$ &
$\phi(X,Y)$ & $\phi(X,Y) = \sum_{y \in Y} d(y, X)$ \\
\hline
  $\opt_{k,t}(X)$ &  $\min\limits_{\substack{O \subseteq X, \abs{C} \le k \\ \abs{O} \le t}}\sum\limits_{p\in X\backslash O} d(p, C)$ &
$\optm_{k,t}(X)$ & $\min\limits_{\substack{O \subseteq X, \abs{C} \le k\\ \abs{O} \le t}} \sum\limits_{p\in X\backslash O} d^2(p, C)$ \\
\hline
\end{tabular}
\caption{List of Notations}
\label{tab:notation}
\end{table}

\fi

\subsection{The Algorithm}
\label{sec:algo}

\begin{algorithm}[t]
 \DontPrintSemicolon
  \SetKwInOut{Input}{Input}
  \SetKwInOut{Output}{Output}
  \Input{dataset $X$, number of centers $k$, number of outliers $t$}
  \Output{a weighted dataset $Q$ as a summary of $X$}

  $i \gets 0$,  $X_i \gets X$, $Q \gets \emptyset$\;
  
  fix a $\beta$ such that $0.25 \le \beta < 0.5$\;
  
  $\kappa \gets \max\{\log n, k\}$\;
  
  let $\sigma: X \rightarrow X$ be a mapping to be constructed, and $\alpha$ be a constant to be determined in the analysis. \label{ln:alpha} \;
  
  \While{$|X_i| > 8 t$}{ 
    construct a set $S_i$ of size $\alpha \kappa$ by random sampling (with replacement) from $X_i$ \label{ln:sample} \;
    
    for each point in $X_i$, compute the distance to its nearest point in $S_i$ \label{ln:assign} \;
    
    let $\rho_i$ be the smallest radius s.t.\ $|B(S_i, X_i, \rho_i)| \geq \beta |X_i|$. Let $C_i \gets B(S_i, X_i, \rho_i)$ \label{ln:construction} \; 
    
    for each $x \in C_i$, choose the point $y \in S_i$ that minimizes $d(x, y)$ and assign $\sigma(x) \gets y$ \label{ln:assign-2}  \;
    
    $X_{i+1} \gets X_i \backslash C_i$ \label{ln:prune} \;
    
    $i \gets i + 1$\;
    
  }
  
  $r \gets i$ \;
  
  for each $x \in X_r$, assign $\sigma(x) \gets x$ \label{ln:outlier}\;

  for each $x \in X_r \cup (\cup_{i=0}^{r-1} S_i)$, assign weight $w_x \gets |\sigma^{-1}(x)|$ and add $(x, w_x)$ into $Q$\;
  
  \Return $Q$\;
  \caption{\summary$(X, k, t)$}
  \label{alg:summary}
\end{algorithm}

Our algorithm is presented in Algorithm~\ref{alg:summary}. It works for both the $k$-means and $k$-median objective functions.   
We note that Algorithm~\ref{alg:summary} is partly inspired by the algorithm for clustering {\em without} outliers proposed in \cite{MP02}. But since  we have to handle outliers now, the design and analysis of our algorithm require new ideas. 

For a set $S$ and a scalar value $\rho$, define $B(S, X, \rho) = \{x \in X\ |\ d(x, S) \le \rho\}$.  Algorithm~\ref{alg:summary} works in rounds indexed by $i$.  Let $X_0 = X$ be the initial set of input points. The idea is to sample a set of points $S_i$ of size $\alpha k$ for a constant $\alpha$ (assuming $k \ge \log n$) from $X_i$, and grow a ball of radius $\rho_i$ centered at each $s \in S_i$. Let $C_i$ be the set of points in the union of these balls.  The radius $\rho_i$ is chosen such that at least a constant fraction of points of $X_i$ are in $C_i$. 

Define $X_{i+1} = X_i \backslash C_i$.  In the $i$-th round, we add the $\alpha k$ points in $S_i$ to the set of centers, and assign points in $C_i$ to their nearest centers in $S_i$. We then recurse on the rest of the points $X_{i+1}$, and stop until the number of points left unclustered becomes at most $8t$. Let $r$ be the final value of $i$. Define the weight of each point $x$ in $\cup_{i=0}^{r-1} S_i$ to be the number of points in $X$ that are assigned to $x$, and the weight of each point in $X_{r}$ to be $1$.  Our summary $Q$ consists of points in $X_{r} \cup (\cup_{i=0}^{r-1} S_i)$ together with their weights.

\subsection{The Analysis}
\label{sec:analysis}

We now try to analyze the performance of Algorithm~\ref{alg:summary}.  The analysis will be conducted for the $(k,t)$-median objective function, while the results also hold for $(k,t)$-means; we will discuss this briefly at the end of this section.
\ifdefined\submission
Due to space constraints, all missing proofs in this section can be found in the supplementary material.
\fi

We start by introducing the following concept.  Note that the summary constructed by Algorithm~\ref{alg:summary} is fully determined by the mapping function $\sigma$ ($\sigma$ is also constructed in Algorithm~\ref{alg:summary}).

\begin{definition}[Information Loss]
\label{def:info-loss}

For a summary $Q$ constructed by Algorithm~\ref{alg:summary}, we define the information loss of $Q$  as 
$$\loss(Q) = \phi_X(\sigma).
$$
That is, the sum of distances of moving each point $x \in X$ to the corresponding center $\sigma(x)$ (we can view each outlier as a center itself).
\end{definition}

We will prove the following theorem, which says that the information loss of the summary $Q$ constructed by Algorithm~\ref{alg:summary} is bounded by the optimal $(k,t)$-median clustering cost on $X$. 

\begin{theorem}
  \label{thm:summary}
Algorithm~\ref{alg:summary} outputs a summary $Q$ such that with probability $(1 - 1/n^2)$ we have that
  $$\textstyle \loss(Q) = O\left( \opt_{k,t}(X) \right).$$
  The running time of Algorithm~\ref{alg:summary} is bounded by $O(\max\{\log n, k\} \cdot n)$, and the size of the outputted summary $Q$ is bounded by $O(k\log n + t)$.
\end{theorem}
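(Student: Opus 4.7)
The plan is to prove the three claims --- running time, summary size, and loss bound --- in sequence. The first two follow from a geometric-decay argument: because $|C_i| \ge \beta|X_i|$, we have $|X_{i+1}| \le (1-\beta)|X_i|$, so only $r = O(\log(n/t))$ iterations occur before $|X_r| \le 8t$ and the loop terminates. Each iteration does $O(\alpha\kappa \cdot |X_i|)$ work (sampling, computing distances to the $\alpha\kappa$ candidate centers, and selecting the smallest $\beta|X_i|$ of those distances), and the geometric sum collapses to $O(\kappa n)$; the summary contains $\alpha\kappa$ points per iteration plus the $\le 8t$ points of $X_r$, totalling $O(\kappa \log n + t) = O(k\log n + t)$.

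The main content is the loss bound. Writing $\loss(Q) = \sum_i \sum_{x \in C_i} d(x, S_i)$, I would fix an iteration $i$ and first establish $\rho_i = O(\phi(C^*, W_i)/|W_i|)$ with high probability, where $C^*$ and $O^*$ are the centers and outliers of an optimal $(k,t)$-median solution on $X$ and $W_i = X_i \setminus O^*$. Markov's inequality gives a ``core'' $M_i \subseteq W_i$ of points within $c\,\phi(C^*, W_i)/|W_i|$ of their optimal center, with $|M_i| \ge (1-\varepsilon)|W_i|$; since $|X_i| > 8t$ forces $|W_i| \ge 7|X_i|/8$, this core is a constant fraction of $X_i$. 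Partition $M_i$ by optimal cluster, treat the $k$ pieces as weighted bins, and apply the balls-in-bins lemma from Section~\ref{sec:prelim} with the $\alpha\kappa \ge k$ samples as uniform balls: the total weight of \emph{hit} bins is $\ge \beta|X_i|$ except with probability $e^{-\Omega(\alpha\kappa)}$. A triangle-inequality hop from each hit-cluster point through its optimal center to the sample in the same cluster upper-bounds that point's distance to $S_i$ by $2c\,\phi(C^*, W_i)/|W_i|$, hence bounds $\rho_i$.

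To aggregate across iterations I would split each contribution into a non-outlier part $D_i = C_i \cap W_i$ and an outlier part $E_i = C_i \cap O^*$. For the former, the same triangle inequality (applied per point) yields $\sum_{x \in D_i} d(x, S_i) \le \phi(C^*, D_i) + |D_i| \cdot O(\phi(C^*, W_i)/|W_i|)$; the $\phi(C^*, D_i)$'s telescope across $i$ to $\phi(C^*, W_0 \setminus W_r) \le \opt_{k,t}(X)$, and the geometric shrinkage of $|W_i|$ forces the remaining tail to sum to $O(\opt_{k,t}(X))$ as well. For the at-most-$t$ outliers captured in total ($\sum_i |E_i| \le t$), the uniform bound $\rho_i = O(\opt_{k,t}(X)/t)$ --- valid because $|W_i| = \Omega(t)$ throughout the loop --- gives an outlier contribution of $O(\opt_{k,t}(X))$. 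Union-bounding per-iteration failures over the $O(\log n)$ rounds secures the $1 - 1/n^2$ probability. The main obstacle is this aggregation: squeezing the naive $O(\log n)\cdot \opt$ sum-of-worst-cases down to $O(\opt)$ rests crucially on the telescoping of $\phi(C^*, D_i)$, which must be arranged so the non-telescoping remainder is a genuine geometric tail rather than a harmonic-type divergence. The $(k,t)$-means statement follows by the same outline with the quadratic-triangle inequality $(a+b)^2 \le 2a^2 + 2b^2$ replacing the metric one throughout, at the cost of only a constant factor.
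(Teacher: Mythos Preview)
Your running-time and summary-size arguments are fine and match the paper's. The gap is exactly where you flag the ``main obstacle'': your aggregation of the non-outlier cost across rounds does not close.

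Your Markov/balls-in-bins step correctly yields $\rho_i = O\bigl(\phi(C^*, W_i)/|W_i|\bigr)$ with high probability, and the telescoping part $\sum_i \phi(C^*, D_i) \le \opt_{k,t}(X)$ is fine. But the ``remaining tail'' $\sum_i |D_i|\cdot \phi(C^*, W_i)/|W_i|$ is \emph{not} forced to be $O(\opt_{k,t}(X))$ by the geometric shrinkage of $|W_i|$. Swap the order of summation: each point $x$ contributes $d(x, C^*)\cdot \sum_{i : x \in W_i} |D_i|/|W_i|$, and since $|D_i|/|W_i|$ is bounded below by a positive constant (roughly $\beta$) at every round, that inner sum is $\Theta(r)=\Theta(\log(n/t))$ for any point that survives to the late rounds. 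Nothing in the algorithm prevents the bulk of the optimal cost from sitting on such points --- the algorithm removes points close to the \emph{random sample} $S_i$, not points close to $C^*$ --- so your bound degrades to $O(\log n)\cdot \opt_{k,t}(X)$. The shrinkage of $|W_i|$ controls the \emph{number} of surviving points, not the \emph{cost} they carry; $\phi(C^*, W_i)$ need not decay at all until the final rounds.

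The paper closes this by reversing the direction of comparison: rather than upper-bounding $\sum_i \rho_i|C_i|$ via $\phi(C^*, W_i)$, it lower-bounds $\opt_{k,t}(X)$ by $\Omega\bigl(\sum_i \rho_i|D_i|\bigr)$ directly. It introduces $\rho_i^{\tt opt}$, the minimum radius at which some $k$ balls can cover an $h$-fraction of $W_i$ (with $2\beta < h < 1$), and shows $\rho_i \le 2\rho_i^{\tt opt}$ by essentially your balls-in-bins argument. The crucial new idea is a charging scheme for $\opt_{k,t}(X) \ge \Omega\bigl(\sum_i \rho_i^{\tt opt}|D_i|\bigr)$: the ``far'' set $E_i = \{x \in W_i : d(x, C^*) \ge \rho_i^{\tt opt}\}$ has $|E_i| \ge (1-h)|W_i|$, and by passing to residue classes modulo a constant $z$ chosen so that $|E_{i+z}| \le |E_i|/3$, the thinned sets $E_i^z = E_i \setminus \bigcup_{j>0} E_{i+jz}$ become pairwise disjoint within each residue class while retaining $|E_i^z| \ge \Omega(|D_i|)$. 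Disjointness gives $\opt_{k,t}(X) \ge \sum_{i \equiv \ell} \rho_i^{\tt opt}|E_i^z|$ for each residue $\ell$, and averaging over the $z$ residues recovers the full sum at the cost of the constant $1/z$. This disjointness is precisely what stops any single point from being charged $\Theta(\log n)$ times, which is the step your proposal is missing.
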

\ifdefined\submission
The proof of this theorem relies on building an upper bound on $\phi_X(\sigma)$ and a lower bound on $\opt_{k,t}(X)$. Namely, $\phi_X(\sigma) = O(\sum_i\rho_i|D_i|)$ and $\opt_{k,t}(X) = \Omega(\sum_i\rho_i|D_i|)$, where $D_i = C_i \backslash O^*$, where $C_i$ is constructed in the $i$-th round of Algorithm \ref{alg:summary} and $O^*$ is the set of outliers returned by the optimal algorithm. See the detailed proof in the supplementary material.
\else
\fi

As a consequence of Theorem~\ref{thm:summary}, we obtain by  triangle inequality arguments 
the following corollary that directly characterizes the quality of the summary  in the task of $(k, t)$-median.
\ifdefined\submission
We include a proof in the supplementary material for completeness.
\fi
\begin{corollary}
  \label{cor:summary-quality}
  If we run a  $\gamma$-approximation algorithm for $(k, t)$-median on $Q$, we can obtain a set of centers $C$ and a set of outliers $O$ such that
  $$\phi(X\backslash O, C) = O(\gamma\cdot \opt_{k,t}(X))$$
  with probability $(1 - 1/n^2)$.
\end{corollary}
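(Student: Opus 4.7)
The plan is a two-step triangle-inequality argument that converts any approximate solution on the weighted summary $Q$ back to a solution on $X$, combined with a matching upper bound on the $Q$-optimum in terms of $\opt_{k,t}(X)$ and $\loss(Q)$. With Theorem~\ref{thm:summary} in hand, these two steps close the gap.

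First, let $(C, O_Q)$ be the $\gamma$-approximate solution returned on $Q$, so $|C| \le k$ and $O_Q \subseteq Q$ has total weight at most $t$. I define the corresponding outlier set in $X$ by $O = \bigcup_{q \in O_Q} \sigma^{-1}(q)$; since the weight of each $q \in Q$ is exactly $|\sigma^{-1}(q)|$, we have $|O| \le t$. Any $p \in X\setminus O$ then satisfies $\sigma(p) \in Q \setminus O_Q$, and the triangle inequality gives $d(p, C) \le d(p, \sigma(p)) + d(\sigma(p), C)$. Summing over all such $p$,
\begin{equation*}
\phi(X\setminus O, C) \;\le\; \loss(Q) + \sum_{q\in Q\setminus O_Q} w_q\, d(q, C) \;\le\; \loss(Q) + \gamma\cdot\opt_{k,t}^{w}(Q),
\end{equation*}
where $\opt_{k,t}^{w}(Q)$ denotes the weighted $(k,t)$-median optimum on $Q$.

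Next, I bound $\opt_{k,t}^{w}(Q)$ by exhibiting a feasible solution. Let $(C^*, O^*)$ be optimal for $(k,t)$-median on $X$; use $C^*$ as centers and, for each $q\in Q$, remove a weight of $|\sigma^{-1}(q)\cap O^*|$ from $q$, so the total weight removed equals $|O^*|\le t$. Another triangle inequality gives
\begin{equation*}
\sum_{p\in X\setminus O^*} d(\sigma(p), C^*) \;\le\; \sum_{p\in X\setminus O^*} \bigl[d(p,\sigma(p)) + d(p, C^*)\bigr] \;\le\; \loss(Q) + \opt_{k,t}(X).
\end{equation*}
Combining the two displays yields $\phi(X\setminus O, C) \le (1+\gamma)\loss(Q) + \gamma\cdot\opt_{k,t}(X)$, and substituting $\loss(Q) = O(\opt_{k,t}(X))$ from Theorem~\ref{thm:summary} (which holds with probability $1-1/n^2$) gives the claimed $O(\gamma\cdot\opt_{k,t}(X))$ bound.

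The main subtlety is in the second step: removing a fractional weight per summary point requires the $\gamma$-approximation to be run against a weighted $(k,t)$-median whose outlier budget is specified as total weight. If it instead must remove whole summary points, I would round by declaring $q$ an outlier exactly when $|\sigma^{-1}(q)\cap O^*| \ge w_q/2$; this at most doubles the budget on $Q$ but does not affect any asymptotic constant, and the same two-step pipeline still produces the same $O(\gamma\cdot\opt_{k,t}(X))$ guarantee. The $k$-means variant follows by replacing each triangle inequality $d(a,c)\le d(a,b)+d(b,c)$ with $d(a,c)^2 \le 2 d(a,b)^2 + 2 d(b,c)^2$, which only inflates the hidden constants.
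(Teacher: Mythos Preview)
Your proof is correct and follows essentially the same two-step triangle-inequality route as the paper: first bound $\phi(X\setminus O,C)\le \loss(Q)+\sol_{k,t}(Q)$, then bound the cost on $Q$ in terms of $\loss(Q)+\opt_{k,t}(X)$, and finish with Theorem~\ref{thm:summary}. The only difference is cosmetic: the paper delegates the second step to the folklore two-level clustering lemma (Lemma~\ref{lem:two-level} with $s=1$, cited from \cite{GMMMO03,GYZ17}), whereas you spell that argument out explicitly via the feasible solution $(C^*,O^*)$ on $Q$; your discussion of the fractional-versus-integral outlier budget is a nice extra that the paper leaves inside the black box.
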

\ifdefined\submission
\else
\begin{proof}
  Let $\pi : Q \rightarrow Q$ be the mapping returned by the $\gamma$-approximation algorithm for $(k, t)$-median on $Q$; we thus have $\pi(q)=q$ for all $q \in O$ and $\pi(X\backslash O) = C$. Let $\sigma : X \rightarrow X$ be the mapping returned by Algorithm \ref{alg:summary} (i.e. $\sigma$ fully determines $Q$). We have that
  \begin{align*}
    \phi(X\backslash O, C) 
    &\leq \sum_{x \in X} d(x, \pi(\sigma(x)))\\
                           &\leq \sum_{x \in X} \left( d(x, \sigma(x)) + d(\sigma(x), \pi(\sigma(x))) \right)\\
                           &= \sum_{x\in X} d(x, \sigma(x)) +  \sum_{x\in X} d(\sigma(x), \pi(\sigma(x))) \\
                           &= \loss(Q) +  \sum_{q\in Q}w_q \cdot d(q, \pi(q)) \\
    &= \loss(Q) +  \sol_{k,t}(Q),
  \end{align*}
  where $\sol_{k, t}(Q)=\sum_{q\in Q}w_q \cdot d(q, \pi(q))$ denotes the cost of the $\gamma$-approximation on $Q$. The corollary follows from  Theorem \ref{thm:summary} and Lemma \ref{lem:two-level} (set $s = 1$).
\end{proof}
\fi

\ifdefined\submission
\else

In the rest of this section we prove Theorem~\ref{thm:summary}.  We will start by bounding the information loss.

\begin{definition}[$O^*$, $W_i$ and $D_i$]
Define $O^* \subseteq X$ to be the set of outliers chosen by running the optimal $(k, t)$-median algorithm on $X$; we thus have $\abs{O^*} = t$.  For $i = 0, 1, \ldots, r-1$, define $W_i = X_i \backslash O^*$ and $D_i = C_i \backslash O^*$, where $X_i$ and $C_i$ are  defined in Algorithm~\ref{alg:summary}. 
\end{definition}

We need the following utility lemma. It says that at each iteration in the while loop in Algorithm \ref{alg:summary}, we always make sure that at least half of the remaining points are not in $O^*$.

\begin{lemma}
  \label{lem:X-W}
  For any $0 \leq i < r$, where $r$ is the total number of rounds in Algorithm \ref{alg:summary}, we have $2|W_i| \geq |X_i|$.
\end{lemma}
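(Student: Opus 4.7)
The plan is to read the lemma as a direct consequence of the stopping condition of the while loop in Algorithm~\ref{alg:summary}. Specifically, the loop continues to the $i$-th iteration only when $|X_i| > 8t$, so any index $i$ with $0 \le i < r$ automatically satisfies this strict inequality. I will exploit this together with the global bound $|O^*|=t$.

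The argument proceeds in one short chain. First, since $W_i = X_i \setminus O^*$ and $|O^*|=t$, we have the trivial set-theoretic lower bound $|W_i| \ge |X_i| - |X_i \cap O^*| \ge |X_i| - t$. Next, plugging in the loop-invariant $|X_i| > 8t$ gives $t < |X_i|/8$, hence $|W_i| > |X_i| - |X_i|/8 = 7|X_i|/8$. Doubling yields $2|W_i| > 7|X_i|/4 \ge |X_i|$, which is what we want.

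There is essentially no obstacle here: the lemma is really just restating that, as long as the while loop has not yet terminated, the set $O^*$ of optimal outliers is vastly outnumbered by non-outlier points in $X_i$, so removing $O^*$ can only shrink $X_i$ by a negligible fraction. The only thing to double-check is that the bound on $|X_i|$ is indeed a strict inequality and holds for every $i < r$ (not just $i=r-1$), which follows because the loop condition is tested at the top of each iteration. I would present the proof as the two inequalities above, taking care to use $|X_i|>8t$ rather than $\ge 8t$ so that the final conclusion is comfortably $2|W_i|\ge |X_i|$ (in fact with room to spare).
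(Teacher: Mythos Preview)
Your proof is correct and follows essentially the same approach as the paper: both use the while-loop condition $|X_i| > 8t$ together with $|O^*| = t$ and the set-theoretic bound $|W_i| \ge |X_i| - |O^*|$ to conclude $2|W_i| \ge |X_i|$. The only difference is cosmetic algebra (the paper writes it as $2|W_i| \ge |X_i| + (|X_i| - 2|O^*|) \ge |X_i|$, which in fact only needs $|X_i| \ge 2t$), but the underlying idea is identical.
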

\ifdefined\submission
\else
\begin{proof}
According to the condition of the while loop in Algorithm \ref{alg:summary} we have $|X_i| > 8 t$ for any $0 \leq i < r$. Since $|O^*| = t$, we have 
$$2|W_i| = 2|X_i\backslash O^*| \geq |X_i| + (|X_i| - 2|O^*|) \geq |X_i|.$$
\end{proof}
\fi

The rest of the proof for Theorem~\ref{thm:summary} proceeds as follows.  We first show in Lemma~\ref{lem:ub} that $\loss(Q) = \phi_{X}(\sigma)$ can be upper bounded by $O(\sum_{0 \le i < r} \rho_i \abs{D_i})$ (Lemma~\ref{lem:ub}). We then show in Lemma~\ref{lem:lb} that $\opt_{k,t}(X)$ can be lower bounded by $\Omega(\sum_{0 \le i < r} \rho_i \abs{D_i})$ with high probability (Lemma~\ref{lem:lb}).  Theorem~\ref{thm:summary} then follows.
\ifdefined\submission
As a reminder, all the missing proofs can be found in the supplementary material.
\fi

\begin{lemma}[upper bound]
  \label{lem:ub}
  It holds that
  $$ \phi_{X}(\sigma) \leq 2\sum_{0 \leq i < r} \rho_i |D_i|.$$
Here $\rho_i$ is the radius we chosen in the $i$-th round of Algorithm \ref{alg:summary}.
\end{lemma}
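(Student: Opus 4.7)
The plan is to decompose $\phi_X(\sigma)$ according to how $\sigma$ is constructed in Algorithm~\ref{alg:summary}, and then use the choice of the parameters $\beta$ and the while-loop threshold $8t$ to pass from $|C_i|$ to $|D_i|$.

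First I would observe that the sets $\{C_0, C_1, \dots, C_{r-1}, X_r\}$ form a partition of $X$: each iteration sets $X_{i+1} = X_i \setminus C_i$, and the loop terminates with the leftover set $X_r$. Splitting the sum $\phi_X(\sigma) = \sum_{x \in X} d(x, \sigma(x))$ over this partition, the points in $X_r$ contribute nothing because line~\ref{ln:outlier} sets $\sigma(x) = x$ for $x \in X_r$. For each $x \in C_i$, line~\ref{ln:assign-2} assigns $\sigma(x)$ to be a nearest point in $S_i$, so by the definition of $\rho_i$ on line~\ref{ln:construction} we get $d(x, \sigma(x)) \le \rho_i$. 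Consequently,
\[
\phi_X(\sigma) \;\le\; \sum_{i=0}^{r-1} \rho_i\, |C_i|.
\]

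Next I would show $|C_i| \le 2\,|D_i|$ for every $0 \le i < r$, which is the only non-trivial step. By the loop condition we have $|X_i| > 8t$, and by the definition of $\rho_i$ we have $|C_i| \ge \beta\, |X_i|$; combining these with $\beta \ge 1/4$ gives $|C_i| > 8\beta t \ge 2t$. Since $|O^*| = t$, we conclude $|C_i \cap O^*| \le t \le |C_i|/2$, so $|D_i| = |C_i| - |C_i \cap O^*| \ge |C_i|/2$. Plugging this bound into the previous display yields
\[
\phi_X(\sigma) \;\le\; \sum_{i=0}^{r-1} \rho_i\,|C_i| \;\le\; 2\sum_{i=0}^{r-1} \rho_i\,|D_i|,
\]
which is exactly what the lemma claims.

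The main obstacle, such as it is, is the bookkeeping in the second step: one has to notice that the constants $\beta \ge 1/4$ on line~2 of the algorithm and the threshold $8t$ in the while-loop are chosen precisely so that $8\beta t \ge 2t$, guaranteeing $|C_i \cap O^*|$ is at most half of $|C_i|$. Everything else is just unwinding the definitions of $\sigma$, $C_i$, and $\rho_i$. Note that this lemma is purely combinatorial and does not use any randomness or the notion of the information loss being compared to $\opt$; those come in via Lemma~\ref{lem:lb}.
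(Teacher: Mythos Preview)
Your proof is correct and follows essentially the same approach as the paper's own proof: decompose $\phi_X(\sigma)$ over the partition $C_0,\dots,C_{r-1},X_r$, use $d(x,\sigma(x))\le\rho_i$ on $C_i$ and $\sigma(x)=x$ on $X_r$, and then establish $|D_i|\ge |C_i|/2$ from $|C_i|\ge\beta|X_i|>8\beta t\ge 2t$ together with $|O^*|=t$. The only difference is the order of presentation (the paper proves $|D_i|\ge|C_i|/2$ first and then bounds $\phi$), which is immaterial.
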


\ifdefined\submission
\else
\begin{proof}
  First, note that by Line~\ref{ln:construction} and the condition of the while loop in Algorithm~\ref{alg:summary} we have
  \begin{equation} 
  \label{eq:a-1}
  |C_i| \geq \beta |X_i| \geq 8\beta t \overset{\beta \geq 0.25}{\geq} 2 t.
  \end{equation} 
  We thus have by the definition of $D_i$ that
  \begin{eqnarray}
    |D_i| &=& |C_i \backslash O^*| \geq |C_i| - |O^*| \nonumber \\
    &\overset{|O^*| = t}{=}& |C_i| - t \nonumber  \\
    &\overset{\text{by } (\ref{eq:a-1})}{\geq}& |C_i| / 2.  \label{eq:a-2}
  \end{eqnarray}

  Observe that $X\backslash X_r = \cup_{0 \leq i < r} C_i$ and $C_i \cap C_j = \emptyset$ for any $i\neq j$, we can bound $\phi_{X\backslash X_r}(\sigma)$ by the following.
  \begin{eqnarray*}
    \phi_{X\backslash X_r}(\sigma) &=& \sum_{0 \leq i < r} \phi_{C_i}(\sigma) \\
    &\leq& \sum_{0 \leq i < r} \rho_i |C_i|\\
    &\overset{\text{by } (\ref{eq:a-2})}{\leq}& \sum_{0 \leq i < r} 2 \rho_i |D_i|.
  \end{eqnarray*}
  The lemma follows since by our construction at Line~\ref{ln:outlier} we have $\phi_{X_r}(\sigma) = 0$.
\end{proof}
\fi

We now turn to the lower bound of $\opt_{k,t}(X)$.  

\begin{lemma}[lower bound]
  \label{lem:lb}
  It holds that
  $$\textstyle \opt_{k,t}(X) = \Omega\left(\sum_{0 \leq i < r} \rho_i |D_i|\right).$$
\end{lemma}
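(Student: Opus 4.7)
The plan is to establish a per-iteration lower bound of the form $\phi(C^*, W_i) = \Omega(\rho_i |D_i|)$ via a case analysis on how $W_i = X_i \setminus O^*$ sits relative to the optimal centers $C^*$, and then combine these bounds across iterations. Fix an iteration $i$ and a constant $c > 2$ to be chosen, and partition $W_i$ into $\text{close}_i = \{x \in W_i : d(x,C^*) \le \rho_i/c\}$ and $\text{far}_i = W_i \setminus \text{close}_i$.

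The main per-iteration step rules out the regime $|\text{close}_i| \ge (1-\delta)|W_i|$ for a sufficiently small constant $\delta > 0$. Assume it holds; group the close points by their nearest center in $C^*$ and call a group \emph{heavy} if it contains an $\Omega(1/k)$ fraction of $|W_i|$. Since there are at most $k$ groups, the light ones together contribute an arbitrarily small fraction of $|W_i|$ (by tuning the hidden constant in the definition of heavy). By Lemma~\ref{lem:X-W}, $|W_i| \ge |X_i|/2$, so each heavy group is also an $\Omega(1/k)$ fraction of $X_i$. With $|S_i| = \alpha\kappa = \Omega(k\log n)$ independent samples drawn from $X_i$, a Chernoff bound (Lemma~\ref{lem:Chernoff}) and a union bound over the $\le k$ heavy groups ensure that every heavy group is hit by some sample point, with probability $1 - 1/n^{\Omega(1)}$. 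For each heavy group with center $c_j^*$, the hitting sample lies within $\rho_i/c$ of $c_j^*$, so by triangle inequality the ball of radius $2\rho_i/c$ around it covers the entire group. Hence $|B(S_i, X_i, 2\rho_i/c)| \ge (1-o(1))(1-\delta)|W_i| \ge \beta |X_i|$ for $\delta$ small, using $|X_i| \le 2|W_i|$ and $\beta < 1/2$. Since $2\rho_i/c < \rho_i$, this contradicts the minimality of $\rho_i$ on line~\ref{ln:construction}.

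Consequently, with high probability $|\text{far}_i| \ge \delta|W_i|$ for every iteration, which gives the per-iteration bound $\opt_{k,t}(X) \ge \phi(C^*, W_i) \ge \sum_{x\in\text{far}_i} d(x, C^*) \ge \delta |W_i|\cdot \rho_i/c = \Omega(\rho_i |D_i|)$, using $|D_i|\le |X_i|\le 2|W_i|$. The main obstacle is turning this per-iteration bound into the desired bound on the sum $\sum_i \rho_i |D_i|$: summing naively over the $r = O(\log n)$ iterations only yields $O(\opt_{k,t}(X)\log n)$. To shave the $\log n$ factor, I would strengthen the per-iteration inequality to $\sum_{x\in D_i} d(x, C^*) = \Omega(\rho_i|D_i|)$, i.e., show that the \emph{incremental} OPT mass peeled off at iteration $i$ already dominates $\rho_i|D_i|$. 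This should hold because in the far-heavy regime the random sample $S_i$ itself lies mostly in the far region of $X_i$, so by the reverse triangle inequality a constant fraction of $D_i = B(S_i, X_i, \rho_i)\cap W_i$ remains far from $C^*$. Telescoping $\phi(C^*, W_i) - \phi(C^*, W_{i+1}) = \sum_{x\in D_i} d(x, C^*)$ over $i$ then gives $\sum_i \rho_i|D_i| = O(\phi(C^*, W_0) - \phi(C^*, W_r)) = O(\opt_{k,t}(X))$, as desired.
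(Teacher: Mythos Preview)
Your Steps~1--3 are essentially correct and recover what the paper proves as Lemma~\ref{lem:nu-mu}: you show (in slightly different language) that with high probability one cannot cover a large fraction of $W_i$ by $k$ balls of radius $o(\rho_i)$, which is equivalent to $\rho_i^{\tt opt}=\Omega(\rho_i)$. You also correctly identify the main obstacle in Step~4: the per-iteration bound $\opt_{k,t}(X)=\Omega(\rho_i|D_i|)$ only yields a $\log n$ loss when summed.

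The gap is in Step~5. Your reverse-triangle argument does not go through: a ``far'' sample $s$ satisfies $d(s,C^*)>\rho_i/c$, but any $x\in B(s,\rho_i)$ only gives $d(x,C^*)\ge \rho_i/c-\rho_i<0$, which is vacuous. More fundamentally, the claim that a constant fraction of $D_i$ is far from $C^*$ need not follow from anything you have established: $D_i$ consists of the points \emph{closest} to the random sample $S_i$, and when $S_i$ lands near $C^*$ (which has constant probability), $D_i$ can be concentrated near $C^*$ while the ``far'' mass you exhibited in Step~3 sits entirely in $W_i\setminus D_i$. So the telescoping identity $\phi(C^*,W_i)-\phi(C^*,W_{i+1})=\sum_{x\in D_i}d(x,C^*)$ is correct, but you have not shown the right-hand side is $\Omega(\rho_i|D_i|)$.

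The paper does \emph{not} try to charge to $D_i$. Instead it works with the far sets $E_i=\{x\in W_i:d(x,C^*)\ge\rho_i^{\tt opt}\}$ themselves (which your Step~3 shows satisfy $|E_i|=\Omega(|W_i|)$) and exploits the geometric decay $|E_{i+z}|\le|X_{i+z}|\le(1-\beta)^z|X_i|\le|E_i|/3$ for a suitable constant $z$. The ``exclusive'' sets $E_i^z=E_i\setminus\bigcup_{j>0}E_{i+jz}$ are then pairwise disjoint for $i$ in a fixed residue class modulo $z$, and still satisfy $|E_i^z|\ge|E_i|/2=\Omega(|D_i|)$. Since $\phi(C^*,E_i^z)\ge\rho_i^{\tt opt}|E_i^z|$ and the $E_i^z$ are disjoint, summing within a residue class and averaging over the $z$ classes gives $\opt_{k,t}(X)\ge\Omega\bigl(\tfrac{1}{z}\sum_i\rho_i^{\tt opt}|D_i|\bigr)$, which combined with $\rho_i^{\tt opt}=\Omega(\rho_i)$ yields the lemma. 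The point is that the charge goes to disjoint pieces of the \emph{far} set across iterations, not to the removed sets $D_i$.
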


Before proving the lemma, we would like to introduce a few more notations.

\begin{definition}[$\rho_i^{\tt opt}$ and $h$]
  \label{def:rho}
  Let $h = \frac{1 + 2\beta}{2}$; we thus have $1 > h > 2\beta > 0$ (recall in Algorithm \ref{alg:summary} that $\beta < 0.5$ is a fixed constant). For any $0 \leq i < r$, let $\rho_i^{\tt opt} > 0$ be the \emph{minimum} radius such that there exists a set $Y \subseteq X\backslash O^*$ of size $k$ with
  \begin{equation}
    \label{eq:mu}
    |B(Y, W_i, \rho_i^{\tt opt})| \geq h |W_i|.
  \end{equation}
\end{definition}

The purpose of introducing $\rho_i^{\tt opt}$ is to use it as a bridge to connect $\opt_{k,t}(X)$ and $\rho_i$.  We first have the following.

\begin{lemma}
  \label{lem:lb-2}
  $\opt_{k,t}(X) = \Omega\left( \sum_{0 \leq i < r} \rho_i^{\tt opt} |D_i|\right)$.
\end{lemma}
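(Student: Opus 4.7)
The plan is to build a surrogate set $Y$ of size at most $k$ out of the optimal centers $C^*$ and then exploit the minimality of $\rho_i^{\tt opt}$. First I would let, for each $c \in C^*$, $y_c$ be a nearest point to $c$ in $X \setminus O^*$, and set $Y = \{y_c : c \in C^*\}$. A standard triangle-inequality argument yields $d(x, Y) \le 2\, d(x, C^*)$ for every $x \in X \setminus O^*$, because $y_c$ is at least as close to $c$ as $x$ itself is. Since $Y \subseteq X \setminus O^*$ and $|Y| \le k$, the set $Y$ is a feasible candidate in Definition~\ref{def:rho}, so minimality forces $|B(Y, W_i, \rho)| < h\,|W_i|$ for every $\rho < \rho_i^{\tt opt}$. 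Letting $\rho \uparrow \rho_i^{\tt opt}$ shows that at least $(1-h)|W_i|$ points $x \in W_i$ satisfy $d(x, Y) \ge \rho_i^{\tt opt}$, and therefore $d(x, C^*) \ge \rho_i^{\tt opt}/2$ for each such point.

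Summing these pointwise bounds gives the per-iteration inequality
\[
L_i \;:=\; \sum_{x \in W_i} d(x, C^*) \;\ge\; \tfrac{1-h}{2}\, \rho_i^{\tt opt}\, |W_i|,
\]
and I would then pass from $|W_i|$ to $|D_i|$ using $|D_i| \ge |W_i|/8$, which was already extracted in the proof of Lemma~\ref{lem:ub} from $|C_i| \ge \beta|X_i|$, the while-loop guard $|X_i| > 8t$, and Lemma~\ref{lem:X-W}. Together with the partition $(X \setminus O^*) \setminus X_r = \bigsqcup_{i<r} D_i$ and the telescoping identity $L_0 = L_r + \sum_{i<r} T_i$ where $T_i := \sum_{x \in D_i} d(x, C^*)$, the claim reduces to showing that $\opt = L_0$ dominates $\sum_i \rho_i^{\tt opt}|D_i|$ up to a constant.

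The main obstacle will be this final aggregation: a naive sum of the per-iteration bounds $L_i \ge c\, \rho_i^{\tt opt}\, |D_i|$ upper-bounded by $\opt$ would cost a spurious $r = O(\log n)$ factor, incompatible with the $O(1)$-approximation guarantee in Theorem~\ref{thm:summary}. To avoid it I would argue a dichotomy at each iteration: either $T_i \ge c\,\rho_i^{\tt opt}\,|D_i|$ holds directly, in which case that iteration's contribution telescopes cleanly into $\sum_i T_i \le \opt$; or $T_i$ is strictly smaller, and then a Markov argument on the average distance in $D_i$ forces most of the far set $F_i := \{x \in W_i : d(x, C^*) \ge \rho_i^{\tt opt}/2\}$ to survive into $W_{i+1}$, pushing a lower bound of order $\rho_i^{\tt opt}|W_i|$ onto $L_{i+1}$ instead. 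An amortized accounting along the nested chain $W_0 \supseteq W_1 \supseteq \cdots \supseteq W_r$, together with the geometric decay $|W_{i+1}| \le \tfrac78 |W_i|$, then closes the bound without incurring the $\log n$ factor.
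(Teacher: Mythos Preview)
Your setup is fine: snapping the optimal centers to $Y\subseteq X\setminus O^*$, getting $d(x,Y)\le 2d(x,C^*)$, and using the minimality of $\rho_i^{\tt opt}$ to produce a far set $F_i\subseteq W_i$ of size at least $(1-h)|W_i|$ with $d(x,C^*)\ge \rho_i^{\tt opt}/2$ are all exactly the right first moves (this is the paper's Fact~\ref{fact:EW} together with the standard doubling trick). You have also correctly identified the real obstacle, namely that the naive sum $\sum_i L_i$ picks up a factor of $r=O(\log n)$.

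The gap is in your proposed cure. Your dichotomy says: if $T_i\ge c\,\rho_i^{\tt opt}|D_i|$ we are done for that $i$; otherwise Markov keeps most of $F_i$ inside $W_{i+1}$, so $L_{i+1}\ge c'\rho_i^{\tt opt}|W_i|$. But this only converts the charge $\rho_i^{\tt opt}|D_i|$ into a bound by $L_{i+1}$, and nothing prevents \emph{many} bad iterations from all charging the same tail of the $L$-sequence. Concretely, in a run of bad iterations $T_i<c\rho_i^{\tt opt}|D_i|\le \tfrac{2c}{1-h}L_i$, so $L_{i+1}\ge(1-\tfrac{2c}{1-h})L_i$; the $L_i$'s barely shrink, and $\sum_{\text{bad }i}\rho_i^{\tt opt}|D_i|\le \tfrac{1}{c'}\sum_{\text{bad }i}L_{i+1}$ can still be $\Theta(r\cdot L_0)$. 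The geometric decay $|W_{i+1}|\le\tfrac78|W_i|$ that you invoke does not translate into geometric decay of $L_i$: a few very expensive non-outliers can sit in every $W_i$ and keep $L_i\approx L_0$ throughout. So ``amortized accounting along the chain $W_0\supseteq\cdots$ together with $|W_{i+1}|\le\tfrac78|W_i|$'' is not enough as stated.

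What the paper actually does to kill the $\log n$ factor is a disjointification via residue classes. Define $E_i=\{x\in W_i:d(x,Y)\ge\rho_i^{\tt opt}\}$ (your $F_i$ up to the factor $2$). The geometric decay of $|X_i|$ implies that for a fixed constant $z=\lceil\log_{1-\beta}\tfrac{1-h}{6}\rceil$ one has $|E_{i+z}|\le|E_i|/3$; hence the sets $E_i^z:=E_i\setminus\bigcup_{j>0}E_{i+jz}$ are pairwise disjoint within each residue class modulo $z$ and still satisfy $|E_i^z|\ge|E_i|/2\ge\tfrac{1-h}{2}|D_i|$. Picking the best residue class $\ell$ and using disjointness gives
\[
\phi(Y,X\setminus O^*)\;\ge\;\sum_{i\equiv\ell\,(\mathrm{mod}\,z)}\rho_i^{\tt opt}|E_i^z|\;\ge\;\frac{1}{z}\sum_{0\le i<r}\rho_i^{\tt opt}|E_i^z|\;\ge\;\Omega(1)\cdot\sum_{0\le i<r}\rho_i^{\tt opt}|D_i|,
\]
with only the constant $z$ lost. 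This is the missing idea: you need to pass to a \emph{constant-step} subsequence on which the far sets become disjoint, rather than trying to amortize directly over all $i$.
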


\ifdefined\submission
\else
Fix an arbitrary set $Y \subseteq X \backslash O^*$ of size $k$ as centers.  To prove Lemma~\ref{lem:lb-2} we will use a charging argument 
to connect $\opt_{k,t}(X)$ and $\sum_{0 \leq i < r} \rho_i^{\tt opt} |D_i|$. To this end we introduce the following definitions and facts.

\begin{definition}[$E_i$, $E^m_i$ and $\G^m_\ell$]
For each $0 \leq i < r$, define $E_i = \{x \in W_i ~|~ d(x, Y) \geq \rho_i^{\tt opt}\}$. For any $m \in \mathbb{Z}^+$, define $E^m_i = E_i \backslash(\cup_{j > 0} E_{i + jm})$.  Let $\G_\ell^m = \{ 0 \leq i < r~|~ i \equiv \ell\pmod m\}$.
\end{definition}

Clearly, if $i \neq j$ and $j \equiv i \pmod m$, then $E^m_i$ and $E^m_j$ are disjoint. This leads to the following fact.

 \begin{fact}
    \label{fact:G}
    For any $i = 0, 1, \ldots, r-1$, we have
    \begin{align*}
      \phi(Y, \cup_{i\in \G_\ell^m} E_i^m) &= \sum_{i\in \G_\ell^m}\phi(Y, E_i^m)\\
                                     &\geq \sum_{i\in \G_\ell^m} \rho_i^{\tt opt}|E_i^m|.
    \end{align*}
  \end{fact}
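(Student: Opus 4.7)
The plan is to verify the two pieces of Fact~\ref{fact:G} separately, each as a direct consequence of the preceding definitions rather than any new technical estimate.

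For the equality $\phi(Y,\cup_{i\in\G_\ell^m}E_i^m)=\sum_{i\in\G_\ell^m}\phi(Y,E_i^m)$, I would start by unfolding $\phi(Y,Z)=\sum_{z\in Z}d(z,Y)$, so the left hand side becomes one sum of $d(x,Y)$ over $x$ ranging through the union. The step needed to split this into a sum of $\phi(Y,E_i^m)$ over $i\in\G_\ell^m$ is pairwise disjointness of the family $\{E_i^m : i\in\G_\ell^m\}$. Since any two indices in $\G_\ell^m$ are congruent to $\ell\pmod m$, and hence to each other modulo $m$, this is exactly the observation stated just above the fact. I would briefly recall why that disjointness holds: if $j=i+sm$ with $s>0$, then any $x\in E_j^m\subseteq E_j=E_{i+sm}$ already lies in $\cup_{k>0}E_{i+km}$, and is therefore excluded from $E_i^m$ by the defining formula $E_i^m=E_i\setminus(\cup_{k>0}E_{i+km})$.

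For the inequality, I would argue one term at a time. By the definition $E_i=\{x\in W_i : d(x,Y)\geq \rho_i^{\tt opt}\}$ and the inclusion $E_i^m\subseteq E_i$, every $x\in E_i^m$ satisfies $d(x,Y)\geq \rho_i^{\tt opt}$. Summing this lower bound over $x\in E_i^m$ gives $\phi(Y,E_i^m)\geq \rho_i^{\tt opt}|E_i^m|$, and summing once more over $i\in\G_\ell^m$ yields the claimed estimate.

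I do not anticipate a real obstacle here: both halves of the fact are immediate unpackings of the definitions of $\phi$, $E_i$, $E_i^m$, and $\G_\ell^m$. The only step worth writing out carefully is the disjointness derivation in the first paragraph, since it is the one place where the ``$E_i$ minus the later copies'' structure of $E_i^m$ actually does work; everything else is bookkeeping.
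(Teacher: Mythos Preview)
Your proposal is correct and matches the paper's approach exactly: the paper does not give a formal proof of this fact, merely noting beforehand that the $E_i^m$ for $i\in\G_\ell^m$ are pairwise disjoint and declaring that ``this leads to the following fact.'' Your write-up fills in precisely the two steps the paper leaves implicit (disjointness for the equality, and $E_i^m\subseteq E_i$ plus the defining threshold $d(x,Y)\ge\rho_i^{\tt opt}$ for the inequality), so there is nothing to add.
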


By the definitions of $\rho_i^{\tt opt}$ and $E_i$ we directly have: 
  \begin{fact}
  For any $i = 0, 1, \ldots, r-1$,
    \label{fact:EW}
    $|E_i| \geq (1 - h) |W_i|$.
  \end{fact}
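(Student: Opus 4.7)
The plan is to derive Fact \ref{fact:EW} as essentially a one-line consequence of the minimality built into Definition \ref{def:rho}. I first unpack that definition: $\rho_i^{\tt opt}$ is the smallest radius such that \emph{some} $k$-subset $Y'\subseteq X\backslash O^*$ satisfies $|B(Y',W_i,\rho_i^{\tt opt})|\ge h|W_i|$. The contrapositive of this minimality is the statement I actually need: for \emph{every} $k$-subset $Y'\subseteq X\backslash O^*$ and every radius $\rho<\rho_i^{\tt opt}$,
$$|B(Y',W_i,\rho)| < h|W_i|,$$
because any witness $(Y',\rho)$ violating this strict inequality would beat the supposed minimum.

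Next I would specialize to the fixed arbitrary set $Y\subseteq X\backslash O^*$ of size $k$ that was chosen at the start of the charging argument. By the definition of $E_i$, we have $W_i\backslash E_i=\{x\in W_i:d(x,Y)<\rho_i^{\tt opt}\}$. Because $W_i$ is finite, this set coincides with $B(Y,W_i,\rho^\star)$ for $\rho^\star$ equal to the maximum value in $\{d(x,Y):x\in W_i,\,d(x,Y)<\rho_i^{\tt opt}\}$ (or is empty, in which case the conclusion is immediate). Applying the displayed inequality to $(Y,\rho^\star)$ gives $|W_i\backslash E_i|=|B(Y,W_i,\rho^\star)|<h|W_i|$, and subtracting from $|W_i|$ yields $|E_i|\ge(1-h)|W_i|$, which is the claim.

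The main — really the only — subtlety is the open/closed distinction at distance exactly $\rho_i^{\tt opt}$: the notation $B(\cdot,\cdot,\cdot)$ is a closed ball while $E_i$ is cut out by a non-strict inequality in the opposite direction, so naively writing $W_i\backslash E_i=B(Y,W_i,\rho_i^{\tt opt})$ would be incorrect (and would give the wrong direction of inequality at the boundary). The finiteness of $W_i$ is what bridges this gap, by allowing me to replace the half-open condition $d(x,Y)<\rho_i^{\tt opt}$ with a closed ball of some strictly smaller radius $\rho^\star$. Beyond this bookkeeping the proof is entirely definitional — no probabilistic, metric, or geometric machinery is needed — which matches the paper's framing that the fact follows ``directly'' from the definitions of $\rho_i^{\tt opt}$ and $E_i$.
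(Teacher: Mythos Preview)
Your proof is correct and matches the paper's intent: the paper does not actually give a proof of this fact, stating only that it follows ``directly'' from the definitions of $\rho_i^{\tt opt}$ and $E_i$. Your argument is precisely the unpacking of that remark, and your care with the open/closed boundary at distance exactly $\rho_i^{\tt opt}$ (resolved via finiteness of $W_i$) is a legitimate detail the paper leaves implicit.
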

  
Let $z = \lceil \log_{1-\beta} \frac{1-h}{6} \rceil$ (a constant), we have

\begin{fact}
    \label{fact:half}
 For any $i = 0, 1, \ldots, r-1$, $\abs{E_i^z} \ge \abs{E_i}/2$.
\end{fact}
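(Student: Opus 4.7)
The plan is to show that the elements we remove when passing from $E_i$ to $E_i^z$ form only a small fraction of $E_i$, by exploiting the geometric shrinkage of $|X_i|$ across rounds. Observe that
\[
E_i \setminus E_i^z \;=\; E_i \cap \bigcup_{j\geq 1} E_{i+jz} \;\subseteq\; \bigcup_{j\geq 1} E_{i+jz},
\]
and since $E_{i+jz}\subseteq W_{i+jz}\subseteq X_{i+jz}$ (with the convention that $E_{i+jz}=\emptyset$ when $i+jz\geq r$), it suffices to bound $\sum_{j\geq 1}|X_{i+jz}|$ from above.

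First I would derive the one-step shrinkage $|X_{i+1}|\leq (1-\beta)|X_i|$. This follows directly from Line~\ref{ln:construction} of Algorithm~\ref{alg:summary}: since $|C_i|\geq \beta|X_i|$ and $X_{i+1}=X_i\setminus C_i$. Iterating gives $|X_{i+jz}|\leq (1-\beta)^{jz}|X_i|$. Next I apply Lemma~\ref{lem:X-W} to pass from $|X_i|$ to $|W_i|$, obtaining $|X_{i+jz}|\leq 2(1-\beta)^{jz}|W_i|$.

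Now I use the choice $z=\lceil \log_{1-\beta}\tfrac{1-h}{6}\rceil$, which was precisely designed so that $(1-\beta)^z\leq \tfrac{1-h}{6}\leq \tfrac{1}{6}$ (since $h<1$). Summing the geometric series,
\[
\sum_{j\geq 1}|X_{i+jz}| \;\leq\; 2|W_i|\sum_{j\geq 1}(1-\beta)^{jz} \;=\; \frac{2(1-\beta)^z}{1-(1-\beta)^z}\,|W_i| \;\leq\; \frac{12(1-\beta)^z}{5}\,|W_i| \;\leq\; \frac{2(1-h)}{5}\,|W_i|.
\]
Finally, invoking Fact~\ref{fact:EW} which gives $|E_i|\geq (1-h)|W_i|$, I conclude that $|E_i\setminus E_i^z|\leq \tfrac{2}{5}|E_i|<\tfrac{1}{2}|E_i|$, hence $|E_i^z|\geq \tfrac{1}{2}|E_i|$, as claimed.

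The proof is almost entirely bookkeeping; the only step that requires real thought is where to pay the factor of $2$ coming from Lemma~\ref{lem:X-W}. The constant $6$ in the definition of $z$ is tuned precisely so that, after absorbing the $2$ and the geometric-series factor $\tfrac{6}{5}$, one ends up with $\tfrac{2}{5}(1-h)|W_i|$, which is strictly less than $\tfrac{1}{2}|E_i|$ thanks to Fact~\ref{fact:EW}.
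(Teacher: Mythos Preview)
Your proof is correct and follows essentially the same approach as the paper: both exploit the geometric shrinkage $|X_{i+1}|\leq(1-\beta)|X_i|$, invoke Lemma~\ref{lem:X-W} and Fact~\ref{fact:EW} to convert between $|X_i|$, $|W_i|$, and $|E_i|$, and then sum a geometric series. The only cosmetic difference is that the paper first establishes the one-step bound $|E_{i+z}|\leq|E_i|/3$ and iterates to get $|E_{i+jz}|\leq|E_i|/3^j$ before summing, whereas you sum directly in powers of $(1-\beta)^z$ and convert to $|E_i|$ only at the end.
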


\begin{proof}
We first show that $\abs{E_i}, \abs{E_{i+z}}, \ldots$ is a geometrically decreasing sequence.
\begin{eqnarray*}
    |E_{i + z}|   &\leq& |X_{i+z}| \\
    &\leq& (1 - \beta)^z|X_i|  \\
    &\overset{\text{Lemma \ref{lem:X-W}}}{\leq}& 2(1 - \beta)^z|W_i| \\
    &\overset{\text{Fact \ref{fact:EW}}}{\leq}& \frac{2(1 - \beta)^z}{1 - h}|E_i|\\
    &\overset{\text{Def. of} ~z}{\leq}& \frac{|E_i|}{3}.
    \end{eqnarray*}
As a result, we have that $E_i^z$ holds a least a constant fraction of points in $E_i$.
\begin{eqnarray*}
	|E_i^z| &=& |E_i \backslash \cup_{j > 0} E_{i + jz}| \\
	&\geq& |E_i| - \sum_{j>0} \frac{|E_i|}{3^j} \\ 
	&\geq& \frac{|E_i|}{2}.
\end{eqnarray*}
\end{proof}

\begin{fact}
    \label{fact:all}
 For any $i = 0, 1, \ldots, r-1$, $\abs{E_i^z} \ge  (1-h)\abs{D_i}/2$.
\end{fact}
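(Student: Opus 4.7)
The plan is to chain together the three facts already established immediately above, together with a simple containment observation. First, recall the definitions: $W_i = X_i \setminus O^*$ and $D_i = C_i \setminus O^*$, where $C_i \subseteq X_i$ is the set of points captured by the $i$-th ball-growing step. The key structural observation I would make at the outset is that $D_i \subseteq W_i$, because $D_i = C_i \setminus O^* \subseteq X_i \setminus O^* = W_i$. This immediately gives $|D_i| \le |W_i|$, which turns a bound in terms of $|W_i|$ into one in terms of $|D_i|$ at no cost.

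Given this containment, the proof is just a three-step chain. I would start with Fact \ref{fact:half}, which gives $|E_i^z| \ge |E_i|/2$. Then apply Fact \ref{fact:EW}, which gives $|E_i| \ge (1-h)|W_i|$, to obtain $|E_i^z| \ge (1-h)|W_i|/2$. Finally, use $|W_i| \ge |D_i|$ from the containment observation above to conclude $|E_i^z| \ge (1-h)|D_i|/2$, which is exactly the claimed inequality.

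There is no real obstacle here; the fact is essentially bookkeeping that repackages the previously established bounds in a form suitable for the next stage of the lower bound argument for $\opt_{k,t}(X)$ (where one wants to charge $\rho_i^{\tt opt} |D_i|$ against $\phi(Y, \cup_{i \in \mathcal{P}^z_\ell} E_i^z)$ via Fact \ref{fact:G}). The only thing one has to be careful about is not mistakenly trying to relate $|D_i|$ to $|X_i|$ directly (which would lose a factor of $2$ through Lemma \ref{lem:X-W} and yield the weaker constant $(1-h)/4$); going through $W_i$ avoids that loss because $D_i$ sits inside $W_i$ by construction.
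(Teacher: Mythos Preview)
Your proof is correct and follows exactly the paper's approach: chain Fact~\ref{fact:half}, then Fact~\ref{fact:EW}, then the containment $D_i \subseteq W_i$ to arrive at $|E_i^z| \ge (1-h)|D_i|/2$. The paper's proof is literally this one-line chain of inequalities, and your added remark about avoiding the detour through $|X_i|$ is a nice clarification.
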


\begin{proof}
$$\abs{E_i^z}
    \overset{\textbf{Fact}~\ref{fact:half}}{\geq}  |E_i|/2
    \overset{\textbf{Fact}~\ref{fact:EW}}{\geq} (1-h)|W_i|/2
    \overset{D_i \subseteq W_i}{\geq} (1-h)|D_i|/2.  
$$
\end{proof}

\begin{proof}(of Lemma~\ref{lem:lb-2})
Let 
$$ \textstyle \ell = \argmax_{0 \leq j < z} \left( \sum_{i\in \G_j^z} |E_i^z|\right).$$ Then $\phi(Y, X \backslash O^*)$ is at least
  \begin{eqnarray*}
  \phi(Y, \cup_{i \in \G_{\ell}^z} E^z_i)  
                                             &\overset{\textbf{Fact}~\ref{fact:G}}{\geq}& \sum_{i \in \G_{\ell}^z} \rho_i^{\tt opt} |E_i^z| \\
                                             &\overset{\text{def. of\ } \ell}{\geq}& \frac{1}{z} \sum_{0 \leq i < r} \rho_i^{\tt opt} |E^z_i|\\
                                             &\overset{\textbf{Fact}~\ref{fact:all}}{\geq}& \Omega(1) \cdot \sum_{0 \leq i < r} \rho_i^{\tt opt} |D_i|.  
  \end{eqnarray*}
The lemma then follows from the fact that $Y$ is chosen arbitrarily.
\end{proof}
\fi

Note that Lemma \ref{lem:lb-2} is slightly different from Lemma \ref{lem:lb} which is what we need, but we can link them by proving the following lemma.

\begin{lemma}
  \label{lem:nu-mu}
  With probability $1 - 1/n^2$, we have $\rho_i^{\tt opt} \geq {\rho_i}/{2}$ for all $0 \leq i < r$.
\end{lemma}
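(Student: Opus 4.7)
The plan is to fix a round $i$ and show that with high probability $|B(S_i,X_i,2\rho_i^{\tt opt})|\geq \beta|X_i|$; by definition of $\rho_i$ as the smallest radius achieving this threshold, this forces $\rho_i\leq 2\rho_i^{\tt opt}$. A union bound over the (at most $O(\log n)$) rounds then gives the claim.

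Setup. Fix $i$ and let $Y^*\subseteq X\setminus O^*$ with $|Y^*|\leq k$ be an optimal set for round $i$, so that $A_i:=B(Y^*,W_i,\rho_i^{\tt opt})$ has size $|A_i|\geq h|W_i|$, and by Lemma~\ref{lem:X-W} hence $|A_i|\geq h|X_i|/2$. Partition $A_i$ into at most $k$ clusters $A_i^{(1)},\ldots,A_i^{(k)}$ by assigning each point to its nearest center in $Y^*$, with sizes $n_1,\ldots,n_k$ summing to $|A_i|$. The key triangle-inequality observation is that whenever $S_i\cap A_i^{(j)}\neq\emptyset$, the entire cluster $A_i^{(j)}$ lies in $B(S_i,X_i,2\rho_i^{\tt opt})$, because any point in $A_i^{(j)}$ and any sample in $A_i^{(j)}$ are each within $\rho_i^{\tt opt}$ of the common center in $Y^*$.

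Balls-and-bins. Since the $\alpha\kappa$ samples drawn by Algorithm~\ref{alg:summary} are uniform over $X_i$, I model this as throwing $b=\alpha\kappa$ balls into $k+1$ bins, namely the $k$ clusters plus a dummy bin for $X_i\setminus A_i$ of weight $|X_i|-|A_i|$. The bin-selection probabilities $n_j/|X_i|$ (and $(|X_i|-|A_i|)/|X_i|$ for the dummy) then match the hypothesis of Lemma~\ref{lem:bin-ball}, and for $\alpha\geq 2$ we have $b\geq m$ since $\kappa\geq k$. Fix the constant $\eps=(1-2\beta)/8>0$; the lemma yields a constant $\gamma>0$ so that with probability at least $1-e^{-\gamma\alpha\kappa}$ the total weight of empty bins is at most $\eps|X_i|$. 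Hence the hit clusters cover total weight at least $|A_i|-\eps|X_i|\geq (h/2-\eps)|X_i|$, and a short calculation using $h=(1+2\beta)/2$ gives $h/2-\eps=(1+6\beta)/8\geq\beta$ precisely because $\beta<1/2$. Combined with the observation above, $|B(S_i,X_i,2\rho_i^{\tt opt})|\geq\beta|X_i|$, so $\rho_i\leq 2\rho_i^{\tt opt}$ in this round.

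Union bound. Since $\kappa\geq\log n$, the per-round failure probability is at most $e^{-\gamma\alpha\log n}=n^{-\gamma\alpha}$. Choosing the constant $\alpha$ (the one fixed on Line~\ref{ln:alpha}) large enough so that $\gamma\alpha\geq 3$ makes this at most $n^{-3}$, and since each round shrinks $|X_i|$ by a factor of at least $(1-\beta)$ we have $r=O(\log n)\leq n$, so a union bound yields overall failure probability at most $1/n^2$, as claimed.

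The main obstacle is that Lemma~\ref{lem:bin-ball} demands bin-probabilities summing to one, whereas the samples in the algorithm may land outside $A_i$; the fix is the dummy $(k+1)$-st bin, whose weight is harmless because the lemma bounds the total empty-bin weight (which dominates the empty-bin weight among the first $k$ bins). The other delicate point is arithmetic: verifying that the specific choice $h=(1+2\beta)/2$ leaves a constant slack $h/2-\beta>0$, which is exactly what fuels the strict inequality $\beta<1/2$ throughout the balls-and-bins concentration.
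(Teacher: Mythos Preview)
Your proof is correct and follows the same overall strategy as the paper: fix a round, partition $G=B(Y,W_i,\rho_i^{\tt opt})$ by nearest center in $Y$, use the triangle inequality to show that any hit cluster lies entirely in $B(S_i,X_i,2\rho_i^{\tt opt})$, apply Lemma~\ref{lem:bin-ball} to bound the mass of unhit clusters, conclude $\rho_i\le 2\rho_i^{\tt opt}$, and union bound over the $O(\log n)$ rounds. The one difference is that the paper first uses a Chernoff bound to guarantee $|S_i\cap G|\geq\gamma\kappa$ and then invokes Lemma~\ref{lem:bin-ball} only on those samples that landed in $G$, whereas your dummy-bin device applies Lemma~\ref{lem:bin-ball} directly to all $\alpha\kappa$ samples and sidesteps the Chernoff step---a mild simplification that otherwise changes nothing.
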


\ifdefined\submission
\else
\begin{proof}
Fix an $i$, and let $Y \subseteq X\backslash O^*$ be a set of size $k$ such that $|B(Y, W_i, \rho_i^{\tt opt})| \geq h |W_i|$.  Let $G = B(Y, W_i, \rho_i^{\tt opt})$. We assign each point in $G$ to its closest point in $Y$, breaking ties arbitrarily.  Let $P_x$ be the set of all points in $G$ that are assigned to $x$; thus $\{P_x~|~x\in Y\}$ forms a partition of $G$.  

Recall that $S_i$ in Algorithm \ref{alg:summary} is constructed by a random sampling. Define
$$ G' = \{y \in G\ |\ \exists x \in Y\ s.t.\ (y \in P_x) \wedge (S_i\cap P_x \neq \emptyset)\}.
$$
We have the following claim.

\begin{claim}
    \label{claim:g-g}
    For any positive constant $\eps$, there exists a sufficiently large constant $\alpha$ (Line~\ref{ln:alpha} in Algorithm~\ref{alg:summary}) such that
    \begin{equation}
      \label{eq:g-g}
      |G'| \geq (1 - \eps) |G|
    \end{equation}
    with probability $1 - 1/n^2$.  
  \end{claim}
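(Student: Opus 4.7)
The plan is to reduce the statement to the balls-and-bins Lemma~\ref{lem:bin-ball}, viewing $\{P_x : x \in Y\}$ as a collection of at most $k$ weighted bins whose total weight is $\abs{G}$. Observe that $G \setminus G'$ is exactly the union of those bins $P_x$ that receive no point of $S_i$, so $\abs{G} - \abs{G'}$ equals the total weight of empty bins. Hence the claim is equivalent to showing that with probability at least $1 - 1/n^2$ the total weight of empty bins is at most $\eps \abs{G}$.

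The first step is to lower bound the number of samples that actually land in $G$. Since $S_i$ consists of $\alpha \kappa$ points drawn i.i.d.\ uniformly (with replacement) from $X_i$, each sample lies in $G$ with probability $\abs{G}/\abs{X_i}$. By the choice of $Y$ we have $\abs{G} \geq h \abs{W_i}$, and Lemma~\ref{lem:X-W} gives $\abs{W_i} \geq \abs{X_i}/2$, so $\abs{G}/\abs{X_i} \geq h/2$. The expected number of samples falling in $G$ is therefore at least $\alpha \kappa h /2$, and a Chernoff bound (Lemma~\ref{lem:Chernoff}) together with $\kappa \geq \log n$ shows that for $\alpha$ large enough at least $b := \alpha \kappa h/4$ samples land in $G$ with probability at least $1 - 1/(2n^2)$.

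The second step conditions on that event and applies Lemma~\ref{lem:bin-ball}. Conditioned on a sample lying in $G$, it is uniformly distributed on $G$, so it falls in bin $P_x$ with probability $\abs{P_x}/\abs{G}$; this is precisely the setup of Lemma~\ref{lem:bin-ball} with $m \le k$ bins, weights $w_x = \abs{P_x}$, and $b$ independent balls. The precondition $b \ge m$ of that lemma is met because $b \ge \alpha k h / 4 \ge k$ for $\alpha$ large. The lemma then yields a constant $\gamma = \gamma(\eps) > 0$, independent of $b$ and $m$, such that the total weight of empty bins exceeds $\eps \abs{G}$ with probability at most $e^{-\gamma b} \le e^{-\gamma \alpha \kappa h/4}$. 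Since $\kappa \ge \log n$, choosing $\alpha$ large enough in terms of $\gamma$ and $h$ makes this bound at most $1/(2n^2)$. A union bound over the two failure modes delivers the target probability $1 - 1/n^2$.

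The main subtlety I expect to have to argue carefully is the two-stage randomness, i.e.\ cleanly separating the ``how many samples hit $G$'' step from the ``how the samples inside $G$ spread across the $P_x$'' step and justifying the conditional uniformity used in the second stage. Once that is in place the rest is routine: both stages have failure probability exponentially small in $\alpha \kappa \ge \alpha \log n$, so a universal constant $\alpha$ depending only on $\eps$, $\beta$ (through $h$), and the constant from Lemma~\ref{lem:bin-ball} suffices, which matches the statement of the claim.
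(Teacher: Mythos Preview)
Your proposal is correct and follows essentially the same approach as the paper: first use $|G|\ge h|W_i|$ together with Lemma~\ref{lem:X-W} to get $|G|/|X_i|\ge h/2$, apply a Chernoff bound (Lemma~\ref{lem:Chernoff}) to guarantee $\Omega(\kappa)$ samples of $S_i$ land in $G$, and then invoke Lemma~\ref{lem:bin-ball} with the $P_x$'s as weighted bins. If anything, you are slightly more explicit than the paper about the conditional-uniformity step and the final union bound.
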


Note that once we have $(\ref{eq:g-g})$, we have that for a sufficiently small constant $\eps$,
  \begin{eqnarray*}
    |G'| &\geq& (1 - \eps) |G| \\
    	&\overset{\text{Def.}\ \ref{def:rho}}{\geq}& (1 - \eps) h |W_i|\\
         & \overset{h > 2\beta}{\geq} & 2\beta |W_i| \\
         & \overset{\text{Lemma } \ref{lem:X-W}}{\geq} & \beta |X_i|.
  \end{eqnarray*}
  Since $G' \subseteq B(S_i, W_i, 2\rho_i^{\tt opt}) \subseteq B(S_i, X_i, 2\rho_i^{\tt opt})$, we have $|B(S_i, X_i, 2\rho_i^{\tt opt})| \geq \beta |X_i|$. By the definition of $\rho_i$, we have that $\rho_i \leq 2 \rho_i^{\tt opt}$.  The success probability $1 - 1/n$ in Lemma~\ref{lem:nu-mu} is obtained by applying a union bound over all $O(\log n)$ iterations.
  
Finally we prove Claim~\ref{claim:g-g}.  By the definition of $G$ and Lemma~\ref{lem:X-W} we have 
\begin{equation}
\label{eq:b-1}
\abs{G} \ge h \abs{W_i} \ge h/2 \cdot \abs{X_i}.
\end{equation}
Denote $S_i = \{s_1, \ldots, s_{\alpha \kappa}\}$.
Since $S_i$ is a random sample of $X_i$ (of size $\alpha \kappa$), by  (\ref{eq:b-1}) we have that for each point $j \in [\alpha \kappa]$, $\Pr[s_j \in G] \ge h/2$.   For each $j \in [\alpha \kappa]$, define a random variable $Y_j$ such that $Y_j = 1$ if $s_j \in G$, and $Y_j = 0$ otherwise.  Let $Y = \sum_{i \in [\alpha \kappa]} Y_j$; we thus have $\E[Y] \ge h/2 \cdot \alpha \kappa$. By applying Lemma~\ref{lem:Chernoff} (Chernoff bound) on $Y_j$'s, we have that for any positive constant $\gamma$ and $h$, there exists a sufficiently large constant $\alpha$ (say, $\alpha = 10h/\gamma^2$) such that 
\begin{eqnarray*}
\Pr[Y \ge \gamma \kappa] & \ge & 1 - e^{- \left(\frac{2\gamma}{h}\right)^2 \cdot \frac{h}{2} \alpha \kappa / 2}\\
& \ge & 1 - 1/n^3,
\end{eqnarray*}
In other words, with probability at least $1 - 1/n^3$,  $\abs{S_i \cap G} \ge \gamma \kappa$.  The claim follows by applying Lemma~\ref{lem:bin-ball} on each point in $S_i \cap G$ as a ball, and each set $P_x$ as a bin with weight $\abs{P_x}$.  
\end{proof}
\fi

Lemma~\ref{lem:lb} follows directly from Lemma~\ref{lem:lb-2} and Lemma~\ref{lem:nu-mu}.

\fi

{\bf The running time.}  We now analyze the running time of Algorithm~\ref{alg:summary}.

At the $i$-th iteration, the sampling step at Line~\ref{ln:sample} can be done in $O(\abs{X_i})$ time.  The nearest-center assignments at Line~\ref{ln:assign} and \ref{ln:assign-2} can be done in $\abs{S_i} \cdot \abs{X_i} = O(\kappa \abs{X_i})$ time.  Line~\ref{ln:construction} can be done by first sorting the distances in the increasing order and then scanning the shorted list until we get enough points. In this way the running time is bounded by $\abs{X_i} \log \abs{X_i} = O(\kappa \abs{X_i})$. Thus the total running time can be bounded by 
$$
\sum_{i = 0, 1, \ldots, r - 1} O(\kappa \abs{X_i}) = O(\kappa n) = O(\max\{\log n, k\} \cdot n),
$$
where the first equation holds since the size of $X_i$ decreases geometrically, and the second equation is due to the definition of $\kappa$.

Finally, we comment that we can get a similar result for $(k, t)$-means by appropriately adjusting various constant parameters in the proof.
\ifdefined\submission
Please refer to the supplementary material for a more detailed discussion.
\else
\begin{corollary}
  \label{cor:summary}
Let $X_r$ and $\sigma : X \to X$ be computed by Algorithm~\ref{alg:summary}.
We have with probability $(1 - 1/n^2)$ that
$$ \textstyle \sum_{x\in X\backslash X_r}d^2(x,\sigma(x)) = O\left(\optm_{k,t}(X) \right).$$

\end{corollary}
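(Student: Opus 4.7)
The plan is to mirror the proof of Theorem~\ref{thm:summary} line for line, tracking how the analysis adapts when distances are replaced by squared distances. The algorithm itself (the sample size $\alpha\kappa$, the radius $\rho_i$, the set $C_i$, and the mapping $\sigma$) is agnostic to the objective, so all structural facts from the median analysis carry over unchanged; only the way $\rho_i$ enters the two sides of the bound changes.

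First I would re-derive the upper bound analogue of Lemma~\ref{lem:ub}. By construction every $x \in C_i$ satisfies $d(x,\sigma(x)) \le \rho_i$, hence $d^2(x,\sigma(x)) \le \rho_i^2$. Combined with the inequality $|C_i| \le 2|D_i|$ (which depends only on the cardinality bounds $|O^*|=t$ and $|C_i|\ge 2t$, not on the objective), this yields
\begin{equation*}
\sum_{x \in X\setminus X_r} d^2(x,\sigma(x)) \;\le\; \sum_{0 \le i < r} \rho_i^2 |C_i| \;\le\; 2\sum_{0 \le i < r} \rho_i^2 |D_i|.
\end{equation*}

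Second, I would re-derive the lower bound analogue of Lemma~\ref{lem:lb}. Fix any $Y \subseteq X\setminus O^*$ of size $k$. For the same set $E_i = \{x \in W_i : d(x,Y) \ge \rho_i^{\tt opt}\}$ defined in Lemma~\ref{lem:lb-2}, each $x \in E_i$ satisfies $d^2(x,Y) \ge (\rho_i^{\tt opt})^2$. The disjointness of the classes $E_i^z$, Facts~\ref{fact:G}--\ref{fact:all}, and the pigeonhole over the residue classes $\G_\ell^z$ are purely combinatorial in the cardinalities $|W_i|, |D_i|, |E_i|, |E_i^z|$, so they transfer verbatim. Summing $d^2(x,Y)$ over $\cup_{i \in \G_\ell^z} E_i^z$ for the best $\ell$ gives
\begin{equation*}
\sum_{x \in X\setminus O^*} d^2(x,Y) \;=\; \Omega\!\left(\sum_{0 \le i < r}(\rho_i^{\tt opt})^2 |D_i|\right),
\end{equation*}
and minimizing over $Y$ yields $\optm_{k,t}(X) = \Omega(\sum_i (\rho_i^{\tt opt})^2 |D_i|)$.

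Finally, Lemma~\ref{lem:nu-mu} asserts $\rho_i^{\tt opt} \ge \rho_i/2$ simultaneously for all $i$ with probability $1 - 1/n^2$. Its proof is a Chernoff plus balls-and-bins argument on the random sample $S_i$ and does not reference the objective at all, so it applies unchanged. Squaring gives $(\rho_i^{\tt opt})^2 \ge \rho_i^2/4$; chaining with the upper and lower bounds above delivers the claim
\begin{equation*}
\sum_{x\in X\setminus X_r} d^2(x,\sigma(x)) \;=\; O(\optm_{k,t}(X)).
\end{equation*}

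There is no real obstacle here; the only care required is bookkeeping of constants. Each of the two inequalities we squared loses only a constant factor (a $4$ from $\rho_i^{\tt opt}\ge \rho_i/2$ and a $2$ from $|C_i|\le 2|D_i|$), so the hidden constant remains $O(1)$. No probabilistic or geometric step needs to be reproved, and the fixed constants $\beta$, $h$, and $z$ introduced for the median analysis still satisfy every inequality used, since none of those inequalities depended on whether distances were squared.
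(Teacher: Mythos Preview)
Your proposal is correct and follows the same approach the paper sketches: rerun the median analysis with $\rho_i$ replaced by $\rho_i^2$ on both sides. The paper's own justification is a single sentence pointing to the relaxed triangle inequality $2(x^2+y^2)\ge(x+y)^2$ as a substitute wherever triangle inequality is used; your walkthrough is actually more precise in observing that, for this particular corollary, the one place the triangle inequality appears (the inclusion $G'\subseteq B(S_i,W_i,2\rho_i^{\tt opt})$ inside Lemma~\ref{lem:nu-mu}) operates on genuine metric distances rather than on the squared objective, so it needs no modification at all.
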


We note that in the proof for the median objective function we make use of the triangle inequality in various places, while for the means objective function where the distances are squared, the triangle inequality does not hold. However we can instead use the inequality $2(x^2 + y^2) \ge (x + y)^2$, which will only make the constant parameters in the proofs slightly worse. 
\fi

\subsection{An Augmentation}
\label{sec:augment}

In the case when $t \gg k$, which is typically the case in practice since the number of centers $k$ does not scale with the size of the dataset while the number of outliers $t$ does,
we add an augmentation procedure to Algorithm~\ref{alg:summary} to achieve a better practical performance.  The procedure is presented in Algorithm~\ref{alg:augment}.

\begin{algorithm}[t]
 \DontPrintSemicolon
  \SetKwInOut{Input}{Input}
  \SetKwInOut{Output}{Output}
  \Input{dataset $X$, number of centers $k$, number of outliers $t$}
  \Output{a weighted dataset $Q$ as a summary of $X$}

  run {\em Summary-Outliers$(X, k, t)$} (Algorithm~\ref{alg:summary}) and obtain $X_r$ and $S = \cup_{i=0}^{r-1} S_i$\;
  
  construct a set $S'$ of size $\abs{X_r} - \abs{S}$ by randomly sampling (with replacement) from $X \backslash (X_r \cup S)$\;
  
  for each $x \in X \backslash X_r$, set $\pi(x) \gets \arg\min_{y \in S \cup S'} d(x, y)$\;
  
  for each $x \in X_r \cup (\cup_{i=0}^{r-1} S_i)$, assign weight $w_x \gets |\pi^{-1}(x)|$ and add $(x, w_x)$ into $Q$\;
  
  \Return $Q$\;
 
  \caption{\bf Augmented-Summary-Outliers$(X, k, t)$}
  \label{alg:augment}
\end{algorithm}

The augmentation is as follows, after computing the set of outliers $X_r$ and the set of centers $S = \cup_{i=0}^{r-1} S_i$ in Algorithm~\ref{alg:summary}, we sample randomly from $X \backslash (X_r \cup S)$ an additional set of center points $S'$ of size $\abs{X_r} - \abs{S}$.  That is, we try to make the number of centers and the number of outliers in the summary to be balanced.  We then reassign each point in the set $X \backslash X_r$ to its nearest center in $S \cup S'$.  Denote the new mapping by $\pi$. Finally, we include points in $X_r$ and $S$, together with their weights, into the summary $Q$.

It is clear that the augmentation procedure preserves the size of the summary asymptotically. And by including more centers we have 
$\loss(Q) \le \phi_{X}(\pi) \le \phi_{X}(\sigma)$,  
where $\sigma$ is the mapping returned by Algorithm \ref{alg:summary}. The running time will increase to $O(t n)$ due to the reassignment step, but our algorithm is still much faster than all the baseline algorithms, as we shall see in Section~\ref{sec:exp}.

\section{Distributed Clustering with Outliers}
\label{sec:distributed}

In this section we discuss distributed $(k,t)$-median/means using the summary constructed in Algorithm~\ref{alg:summary}. 
\ifdefined\submission
Our main result is the following theorem, which is based on the work by \cite{GMMMO03,GYZ17}. The proof for this theorem can be found in the supplementary material.
\else
We will first discuss the case where the data is randomly partitioned among the $s$ sites, which is the case in all of our experiments. The algorithm is presented in Algorithm~\ref{alg:distributed}. We will discuss the adversarial partition case at the end.   We again only show the results for $(k,t)$-median since the same results will hold for $(k,t)$-means by slightly adjusting the constant parameters. 
\fi

\begin{algorithm}[t]
 \DontPrintSemicolon
  \SetKwInOut{Input}{Input}
  \SetKwInOut{Output}{Output}
  \Input{For each $i \in [s]$, Site $i$ gets input dataset $A_i$ where $(A_1, \ldots, A_s)$ is a random partition of $X$}
  \Output{a $(k,t)$-median clustering for $X = \cup_{i \in [s]} A_i$}

  for each $i \in [s]$, Site $i$ constructs a summary $Q_i$ by running {\em Summary-Outliers$(A_i, k, 2t/s)$} (Algorithm~\ref{alg:summary}) and sends $Q_i$ to the coordinator \label{ln:local} \;
  
  the coordinator then performs a second level clustering on $Q = Q_1 \cup Q_2 \cup \ldots \cup Q_s$ using an {off-the-shelf} $(k,t)$-median algorithm, and returns the resulting clustering.  \label{ln:second}
 
  \caption{\bf Distributed-Median$(A_1, \ldots, A_s, k, t)$}
  \label{alg:distributed}
\end{algorithm}  

\ifdefined\submission
\else
We will make use of the following known results.  The first lemma says that the sum of costs of local optimal solutions that use the same number of outliers as the global optimal solution does is upper bounded by the cost of the global optimal solution.

\begin{lemma}[\cite{GYZ17}]
\label{lem:local-global}

For each $i \in [s]$, let $t_i = \abs{A_i \cap O^*}$ where $O^*$ is the set of  outliers produced by the optimal $(k,t)$-median algorithm on $X = A_1 \cup A_2 \cup \ldots \cup A_s$.  We have
$$ \textstyle \sum_{i \in [s]} \opt_{k, t_i}(A_i) \le O\left( \opt_{k, t}(X) \right).$$
\end{lemma}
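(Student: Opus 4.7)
The plan is to prove this with constant $1$ (which is stronger than the stated $O(\cdot)$ bound) by exhibiting, at each site, a feasible local solution derived directly from the global optimum, and then summing. Concretely, let $C^* \subseteq X$ with $\abs{C^*} \le k$ be the centers chosen by the optimal $(k,t)$-median solution on $X$, and $O^* \subseteq X$ with $\abs{O^*} = t$ the corresponding outlier set, so that
\begin{equation*}
\opt_{k,t}(X) \;=\; \sum_{p \in X \setminus O^*} d(p, C^*).
\end{equation*}

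First I would observe that for each site $i \in [s]$, the pair $(C^*,\, A_i \cap O^*)$ is a feasible solution to the local $(k, t_i)$-median instance on $A_i$: the center set has size at most $k$, and by definition $\abs{A_i \cap O^*} = t_i$, so exactly $t_i$ points are excluded. Its cost on $A_i$ is $\sum_{p \in A_i \setminus O^*} d(p, C^*)$. Since $\opt_{k, t_i}(A_i)$ is the infimum over all feasible local solutions, this yields
\begin{equation*}
\opt_{k, t_i}(A_i) \;\le\; \sum_{p \in A_i \setminus O^*} d(p, C^*).
\end{equation*}
Summing over $i \in [s]$ and using that $\{A_i\}_{i \in [s]}$ partitions $X$ (so the right-hand sides telescope additively over disjoint subsets), I get
\begin{equation*}
\sum_{i \in [s]} \opt_{k, t_i}(A_i) \;\le\; \sum_{i \in [s]} \sum_{p \in A_i \setminus O^*} d(p, C^*) \;=\; \sum_{p \in X \setminus O^*} d(p, C^*) \;=\; \opt_{k, t}(X),
\end{equation*}
which is the desired inequality.

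There is no genuine obstacle here; the only conceptual point is to notice that a single assignment from the global optimum decomposes additively across \emph{any} partition of the point set, and that slicing the global outlier set by site produces precisely the $t_i$ outliers permitted locally. In particular, the argument uses neither randomness of the partition nor any structural property of the metric beyond nonnegativity of distances, so the lemma in fact holds for arbitrary (even adversarial) partitions, and the only place where the ``random partition'' hypothesis of Algorithm~\ref{alg:distributed} is needed is later, when bounding each site's outlier budget $2t/s$ against the realized value $t_i$ via a concentration argument.
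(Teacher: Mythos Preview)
Your proof is correct and in fact establishes the inequality with constant $1$, stronger than the stated $O(\cdot)$. The paper does not supply its own proof of this lemma; it simply cites it as a known result from \cite{GYZ17}, so there is nothing in the paper to compare your argument against. The decomposition you give --- restrict the global optimal pair $(C^*, O^*)$ to each site to obtain a feasible local solution and sum --- is exactly the standard argument one would expect for such a statement.

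One minor slip: you write $C^* \subseteq X$, but per Definition~\ref{def:clustering} the centers may be arbitrary points of $\mathbb{R}^d$, not necessarily input points. This does not affect your argument, since the local optimum $\opt_{k,t_i}(A_i)$ is likewise minimized over centers in $\mathbb{R}^d$, so $C^*$ remains a feasible center set locally.
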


The second lemma is a folklore for two-level clustering.

\begin{lemma}[\cite{GMMMO03,GYZ17}]
\label{lem:two-level}

Let $Q = Q_1 \cup Q_2 \cup \ldots \cup Q_s$ be the union of the summaries of the $s$ local datasets, and let $\sol_{k,t}(\cdot)$ be the cost function of a $\gamma$-approximation algorithm for $(k,t)$-median. We have
$$ \textstyle \sol_{k,t}(Q) \le O(\gamma) \cdot \left(\sum_{i \in [s]} \loss(Q_i) + \opt_{k,t}(X)\right).
$$
\end{lemma}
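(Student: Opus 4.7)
The plan is to reduce the statement to the single inequality
\[
\opt_{k,t}(Q) \;\le\; \sum_{i\in[s]}\loss(Q_i) \;+\; \opt_{k,t}(X),
\]
and then invoke the $\gamma$-approximation guarantee, which by definition gives $\sol_{k,t}(Q) \le \gamma \cdot \opt_{k,t}(Q)$. Multiplying through by $\gamma$ immediately yields the lemma. So the entire argument reduces to exhibiting a single feasible $(k,t)$-median solution on the weighted set $Q$ whose cost is at most $\sum_i \loss(Q_i) + \opt_{k,t}(X)$.

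To produce that feasible solution, I would start from an optimal $(k,t)$-median solution $(C^*, O^*)$ for $X$, with $|C^*| \le k$, $|O^*| \le t$, and cost $\opt_{k,t}(X) = \sum_{x \in X \setminus O^*} d(x, C^*)$. Let $\sigma : X \to Q$ be the aggregated mapping sending $x \in A_i$ to $\sigma_i(x) \in Q_i$, where $\sigma_i$ is the mapping produced by Algorithm~\ref{alg:summary} when constructing $Q_i$. On the weighted set $Q$ I would use the same centers $C^*$ and declare as outliers a total of $|O^*|$ units of weight, namely one unit at $\sigma(x)$ for every $x \in O^*$. This designation is feasible because the outlier weight placed on any single $q \in Q$ is at most $|\sigma^{-1}(q) \cap O^*| \le |\sigma^{-1}(q)| = w_q$, and the total outlier weight is exactly $|O^*| \le t$.

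The cost of this candidate solution on $Q$ equals
\[
\sum_{q \in Q} \bigl(w_q - |\sigma^{-1}(q) \cap O^*|\bigr) \cdot d(q, C^*) \;=\; \sum_{x \in X \setminus O^*} d(\sigma(x), C^*),
\]
which by the triangle inequality $d(\sigma(x), C^*) \le d(x, \sigma(x)) + d(x, C^*)$ is at most
\[
\sum_{x \in X \setminus O^*} d(x, \sigma(x)) \;+\; \sum_{x \in X \setminus O^*} d(x, C^*) \;\le\; \sum_{x \in X} d(x, \sigma(x)) \;+\; \opt_{k,t}(X) \;=\; \sum_{i\in[s]}\loss(Q_i) \;+\; \opt_{k,t}(X),
\]
where the final equality uses the fact that $\loss$ decomposes across sites, $\loss(Q_i) = \sum_{x \in A_i} d(x, \sigma_i(x))$. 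This establishes the upper bound on $\opt_{k,t}(Q)$ and hence, after multiplying by $\gamma$, the lemma.

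There is no substantive obstacle: the argument is essentially the folklore two-level reduction, and the only things one must be careful about are the weighted outlier accounting (verifying that the designation above really is a feasible $(k,t)$-median solution on the weighted input) and that $\sol_{k,t}$ and $\opt_{k,t}$ are interpreted consistently for weighted data. The lemma itself is only useful when combined downstream with the bound $\loss(Q_i) = O(\opt_{k, t_i}(A_i))$ from Theorem~\ref{thm:summary} and with Lemma~\ref{lem:local-global}, which together control $\sum_i \loss(Q_i)$ by $O(\opt_{k,t}(X))$; but that combination is not part of the proof of Lemma~\ref{lem:two-level} itself.
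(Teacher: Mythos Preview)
The paper does not actually prove Lemma~\ref{lem:two-level}: it is stated as a folklore result for two-level clustering and attributed to \cite{GMMMO03,GYZ17}, with no proof given in the paper itself. Your argument is precisely the standard folklore proof one would expect for this statement --- push the optimal solution $(C^*,O^*)$ for $X$ forward through $\sigma$ to obtain a feasible weighted $(k,t)$-median solution on $Q$, bound its cost via the triangle inequality by $\sum_i \loss(Q_i) + \opt_{k,t}(X)$, and then apply the $\gamma$-approximation guarantee. Your outlier accounting (removing one unit of weight at $\sigma(x)$ for each $x\in O^*$, which is feasible because $|\sigma^{-1}(q)\cap O^*| \le w_q$ and the total removed weight is $|O^*|\le t$) is exactly the care point in this argument, and you handle it correctly. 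There is nothing to correct; your proof is correct and matches what the cited references establish.
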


Now by Lemma~\ref{lem:local-global}, Lemma~\ref{lem:two-level} and Theorem~\ref{thm:summary}, we have that with probability $1 - 1/n$,
$\sol_{k,t}(Q) \le O(\gamma) \cdot \opt_{k,t}(X)$.  And by Chernoff bounds and a union bound we have $t_i \le 2t/s$ for all $i$ with probability $1 - 1/n^2$.\footnote{For the convenience of the analysis we have assumed $t/s \ge \Omega(\log n)$, which is justifiable in practice since $t$ typically scales with the size of the dataset while $s$ is usually a fixed number.}  
\fi

\begin{theorem}
\label{thm:main}

Suppose Algorithm~\ref{alg:distributed} uses a $\gamma$-approximation algorithm for $(k,t)$-median in the second level clustering (Line~\ref{ln:second}).  We have with probability $(1 - 1/n)$ that:
\begin{itemize}
\item it outputs a set of centers $C \subseteq \mathbb{R}^d$ and a set of outliers $O \subseteq X$ such that  $\phi(X \backslash O, C) \le O(\gamma) \cdot \opt_{k,t}(X)$;

\item it uses one round of communication whose cost is bounded by $O(sk\log n + t)$;

\item the running time at the $i$-th site is bounded by $O(\max\{\log n, k\} \cdot \abs{A_i})$, and the running time at the coordinator is that of the second level clustering.
\end{itemize}
\end{theorem}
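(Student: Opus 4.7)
The plan is to verify the three bullets in turn. The approximation guarantee is the main content and combines three ingredients already in place: the per-site quality of Summary-Outliers (Theorem~\ref{thm:summary}), the local-global relation under random partition (Lemma~\ref{lem:local-global}), and the standard two-level clustering lemma (Lemma~\ref{lem:two-level}). The communication and runtime bounds are immediate consequences of the size and time bounds stated in Theorem~\ref{thm:summary}.

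For the approximation I would first control the number of true outliers landing at each site. Setting $t_i = |A_i \cap O^*|$, random partition gives $\mathbb{E}[t_i] = t/s$, so under the standing assumption $t/s \ge \Omega(\log n)$ a Chernoff bound plus a union bound over sites yields $t_i \le 2t/s$ for every $i$ with probability $1 - 1/n^2$. This justifies passing outlier budget $2t/s$ at Line~\ref{ln:local}: Theorem~\ref{thm:summary} then gives $\loss(Q_i) = O(\opt_{k, 2t/s}(A_i))$, and since enlarging the outlier budget only decreases the optimal cost, $\opt_{k, 2t/s}(A_i) \le \opt_{k, t_i}(A_i)$. Summing over $i$ and invoking Lemma~\ref{lem:local-global} gives $\sum_i \loss(Q_i) = O(\opt_{k,t}(X))$. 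Feeding $Q = \cup_i Q_i$ to the $\gamma$-approximation at the coordinator and applying Lemma~\ref{lem:two-level} yields $\sol_{k,t}(Q) \le O(\gamma) \cdot \opt_{k,t}(X)$. To convert from the weighted cost on $Q$ to the true cost $\phi(X \setminus O, C)$, I would replay the triangle inequality argument used in Corollary~\ref{cor:summary-quality}, but site-wise: for $x \in A_i$ with site-local mapping $\sigma_i$ and second-level mapping $\pi$, bound $d(x, \pi(\sigma_i(x))) \le d(x, \sigma_i(x)) + d(\sigma_i(x), \pi(\sigma_i(x)))$ and sum across all sites; the first term contributes $\sum_i \loss(Q_i)$ and the second contributes $\sol_{k,t}(Q)$, both already under control, yielding $\phi(X \setminus O, C) = O(\gamma) \cdot \opt_{k,t}(X)$.

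For the last two bullets, Theorem~\ref{thm:summary} gives $|Q_i| = O(k \log n + t/s)$ and runtime $O(\max\{\log n, k\} \cdot |A_i|)$ at site $i$, so summing sizes across $s$ sites yields one-round communication $O(sk \log n + t)$, while the coordinator simply runs the chosen $\gamma$-approximation on input of this size. The main subtlety is the concentration step: the claim $t_i \le 2t/s$ relies on $t/s \ge \Omega(\log n)$, and without it we could not tie $\loss(Q_i)$ to $\opt_{k,t_i}(A_i)$ through Theorem~\ref{thm:summary}. A secondary bookkeeping item is combining the failure probabilities: the concentration event and the $s$ independent failure events of Theorem~\ref{thm:summary} (each of probability at most $1/n^2$) must be union-bounded to deliver the stated overall $1 - 1/n$ success probability.
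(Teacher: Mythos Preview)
Your proposal is correct and follows essentially the same approach as the paper: Chernoff plus union bound to ensure $t_i \le 2t/s$ at every site, Theorem~\ref{thm:summary} applied per site to bound each $\loss(Q_i)$ by $O(\opt_{k,t_i}(A_i))$, Lemma~\ref{lem:local-global} to sum these into $O(\opt_{k,t}(X))$, and Lemma~\ref{lem:two-level} to finish; the communication and runtime bullets follow directly from the size and time bounds in Theorem~\ref{thm:summary}. If anything, your write-up is more explicit than the paper's, which leaves the triangle-inequality passage from $\sol_{k,t}(Q)$ to $\phi(X\setminus O,C)$ implicit.
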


In the case that the dataset is adversarially partitioned, the total communication increases to $O(s(k\log n + t))$.  This is because all of the $t$ outliers may go to the same site and thus $2t/s$ in Line \ref{ln:local} needs to be replaced by $t$.

Finally, we comment that the result above also holds for the summary
constructed in Algorithm~\ref{alg:augment}, except, as discussed in Section~\ref{sec:summary}, that the local running time at the $i$-th site will increase to $O(t \abs{A_i})$.

\section{Experiments}
\label{sec:exp}




\subsection{Experimental Setup}

\subsubsection{Datasets and Algorithms}

\ifdefined\submission
Due to space constraints, we only present the experimental results for two data sets (\kddfull\ and \kddsample). One can find results for a number of other datasets in our supplementary materials and the full paper.
\else
We make use of the following datasets.
\fi

\begin{itemize}
  \ifdefined\submission
  \else
\item \gauss-$\sigma$. This is a synthetic dataset, generated as follows: we first sample $100$ centers from $[0, 1]^5$, i.e., each dimension is sampled uniformly at random from $[0, 1]$. For each center $c$, we generate $10000$ points by adding each dimension of $c$ a random value sampled from the normal distribution $\mathcal{N}(0, \sigma)$. This way, we obtain $100\cdot 10000 = 1$M points in total. We next construct the outliers as follows: we sample $5000$ points from the $1$M points, and for each sampled point, we add a random shift sampled from $[-2, 2]^5$. 
\fi

\item \kddfull. This dataset is from 1999 kddcup competition and contains instances describing connections of sequences of tcp packets. There are about $4.9$M data points.
  \ifdefined\submission
  \else
  \footnote{More information can be found in \url{http://kdd.ics.uci.edu/databases/kddcup99/kddcup99.html}}
  \fi
  We only consider the $34$ numerical features of this dataset. We also normalize each feature so that it has zero mean and unit standard deviation.  There are $23$ classes in this dataset, $98.3\%$ points of the dataset belong to $3$ classes ({\tt normal} $19.6\%$, {\tt neptune} $21.6\%$, and {\tt smurf} $56.8\%$). We consider small clusters as outliers and there are $45747$ outliers.
  
\item \kddsample. This data set contains about $10\%$ points of \kddfull\ (released by the original provider). This dataset is also normalized and there are 8752 outliers. 

  \ifdefined \submission
  \else
\item \susy-$\Delta$. This data set has been produced using Monte Carlo simulations by \citet{BSW14}. Each instance has $18$ numerical features and there are $5$M instances in total.\footnote{More information about this dataset can be found in \url{https://archive.ics.uci.edu/ml/datasets/SUSY}}. We normalize each feature as we did in \kddfull. We manually add outliers as follows: first we randomly sample $5000$ data points; for each data point, we shift each of its dimension by a random value chosen from $[-\Delta, \Delta]$.
  
  \item \spatial-$\Delta$. This dataset is about 3D road network with elevation information from North Jutland, Denmark. It is designed for clustering and regression tasks. There are about 0.4M data points with 4 features. We normalize each feature so that it has zero mean and unit standard deviation. We add outliers as we did for \susy-$\Delta$.
\footnote{More information can be found in \url{https://archive.ics.uci.edu/ml/datasets/3D+Road+Network+(North+Jutland,+Denmark)}.}

\fi
\end{itemize}

\ifdefined\submission
We comment that finding appropriate $k$ and $t$ values for the task of clustering with outliers is a separate problem, and is not part of the topic of this paper. In all our experiments, $k$ and $t$ are naturally suggested by the datasets we use.
\else
Finding appropriate $k$ and $t$ values for the task of clustering with outliers is a separate problem, and is not part of the topic of this paper. In all our experiments, $k$ and $t$ are naturally suggested by the datasets we use unless they are unknown.
\fi



We compare the performance of following algorithms, each of which is implemented using the MPI framework and run in the coordinator model. The data are randomly partitioned among the sites.
\begin{itemize}
\item \bg. Algorithm~\ref{alg:distributed} proposed in this paper, with the augmented version Algorithm~\ref{alg:summary} for the summary construction.  As mentioned we use \kmeansmm\ as the second level clustering at Line~\ref{ln:second}.    
We fix $\alpha = 2$ and $\beta=4.5$ in the subroutine Algorithm~\ref{alg:summary}.


\item \rand. Each site constructs a summary by randomly sampling points from its local dataset. Each sampled point $p$ is assigned a weight equal to the number of points in the local dataset that are closer to $p$ than other points in the summary. The coordinator then collects all weighted samples from all sites and feeds to \kmeansmm\ for a second level clustering. 

\item \kmeanspp. Each site constructs a summary of the local dataset using the \kmeanspp\ algorithm \cite{AV07}, and sends it to the coordinator. The coordinator feeds the unions all summaries to \kmeansmm\ for a second level clustering. 

\item \kmeanspar. An MPI implementation of the \kmeanspar\ algorithm proposed by \citet{BMV+12} for distributed $k$-means clustering. To adapt their algorithm to solve the outlier version, we increase the parameter $k$ in the algorithm to $O(k+t)$, and then feed the outputted centers to \kmeansmm\ for a second level clustering. 

\end{itemize}

\subsubsection{Measurements}
Let $C$ and $O$ be the sets of centers and outliers respectively returned by a tested algorithm.  To evaluate the quality of the clustering results we use two metrics:
(a) \lone\ (for $(k,t)$-median): $\sum_{p \in X\backslash O} d(p, C)$;
(b) \ltwo\  (for $(k,t)$-means): $\sum_{p \in X\backslash O} d^2(p, C)$.

To measure the performance of outlier detection we use three metrics. Let $S$ be the set of points fed into the second level clustering \kmeansmm\ in each algorithm, and let $O^*$ be the set of actual outliers (i.e., the ground truth), we use the following metrics: (a) \prerecall: the proportion of actual outliers that are included in the returned summary, defined as $\frac{|S\cap O^*|}{|O^*|}$;
(b) \recall: the proportion of actual outliers that are returned by \kmeansmm, defined as $\frac{|O\cap O^*|}{|O^*|}$;
(c) \precision: the proportion of points in $O$ that are actually outliers, defined as $\frac{|O\cap O^*|}{|O|}$.

\subsubsection{Computation Environments}
All algorithms are implemented in C++ with Boost.MPI support. We use Armadillo \cite{C10} as the numerical linear library and -O3 flag is enabled when compile the code. All experiments are conducted in a PowerEdge R730 server equipped with 2 x Intel Xeon E5-2667 v3 3.2GHz. This server has 8-core/16-thread per CPU, 192GB Memeory and 1.6TB SSD.

\subsection{Experimental Results}
We now present our experimental results. All results take the average of $10$ runs.
\ifdefined\submission
In our supplementary material, results for more datasets can be found, but all the conclusions remain the same.
\fi
\subsubsection{Quality}
We first compare the qualities of the summaries returned by \bg, \rand\ and \kmeanspar. Note that the size of the summary returned by \bg\ is determined by the parameters $k$ and $t$, and we can not control the exact size. In \kmeanspar, the summary size is determined by the sample ratio, and again we can not control the exact size. On the other hand, the summary sizes of \rand\ and \kmeanspp\ can be fully controlled. To be fair, we manually tune those parameters so that the sizes of summaries returned by different algorithms are roughly the same (the difference is less than $10\%$). In this set of experiments, each dataset is randomly partitioned into $20$ sites.

\ifdefined\submission
\else
Table \ref{tb:gauss} presents the experimental results on \gauss\ datasets with different $\sigma$. 
We observe that \bg\ consistently gives  better \lone\ and \ltwo\ than \kmeanspar\ and \kmeanspp, and \rand\ performs the worst among all.

For outlier detection, \rand\ fails completely. In both \gauss-$0.1$ and \gauss-$0.4$, \bg\ outperforms \kmeanspp\ and \kmeanspar\ in almost all metrics. \kmeanspar\ slightly outperforms \kmeanspp. We also observe that in all \gauss\ datasets, \bg\ gives very high \prerecall, i.e., the outliers are very likely to be included in the summary constructed by \bg.

\begin{table*}[t]
\centering
\begin{tabular}{|l|l|l|l|l|l|l|l|}
  \hline
  dataset & {\tt algo} & \summarysize  &  \lone  & \ltwo &  \prerecall & \precision & \recall \\
  \hline
  \multirow{3}{*}{\gauss-$0.1$}
          & \bg & 2.40e+4 & \textbf{2.08e+5} & \textbf{4.80e+4} & \textbf{0.9890} & \textbf{0.9951} & \textbf{0.9431} \\
          & \kmeanspp & 2.40e+4 & 2.10e+5 & 5.50e+4 & 0.5740 & 0.9750 & 0.5735\\
          & \kmeanspar & 2.50e+4 & 2.10e+5 & 5.40e+4 & 0.6239 & 0.9916 & 0.6235\\
          & \rand & 2.04e+4  &2.17e+5 & 6.84e+4 & 0.0249 & 0.2727 & 0.0249\\
  \hline
  \multirow{3}{*}{\gauss-$0.4$}
          & \bg & 2.40e+4 &  \textbf{4.91e+5} & \textbf{2.72e+5} & \textbf{0.8201} & 0.7915 &  \textbf{0.7657} \\
          & \kmeanspp & 2.40e+4 & 4.97e+5 & 2.82e+5 & 0.2161 & 0.6727 & 0.2091\\
          & \kmeanspar & 2.50e+4  & 4.96e+5 & 2.79e+5 & 0.2573 &  \textbf{0.7996} & 0.2458\\
          & \rand & 2.40e+4 & 4.99e+5 & 2.90e+5 & 0.0234 & 0.2170 &  0.0212 \\
  \hline                                                          
\end{tabular}
\caption{Clustering quality on \gauss-$\sigma$ dataset, $k=100$, $t=5000$
}
\label{tb:gauss}
\end{table*}
\fi

\ifdefined\submission
Table \ref{tb:kdd} presents the experimental results on \kddsample\ and \kddfull\ datasets. We observe that \bg\ gives  better \lone\ and \ltwo\ than \kmeanspar\ and \kmeanspp, and \rand\ performs the worst among all.

For outlier detection, \rand\ fails completely. In both \kddfull\ and \kddsample, \bg\ outperforms \kmeanspp\ and \kmeanspar\ in almost all metrics. \kmeanspar\ slightly outperforms \kmeanspp. 

\else
Table \ref{tb:kdd} presents the experimental results on \kddsample\ and \kddfull\ datasets. In this set of experiments, \bg\ again outperforms its competitors in all metrics. Note that \kmeanspar\ does not scale to \kddfull.
\fi

\begin{table*}[t]
\centering
\begin{tabular}{|l|l|l|l|l|l|l|l|}
  \hline
  dataset & {\tt algo} & \summarysize  &  \lone  & \ltwo &  \prerecall & \precision & \recall \\
  \hline
  \multirow{3}{*}{\kddsample}
          & \bg & 3.37e+4 &  \textbf{8.00e+5} & \textbf{3.46e+6} & \textbf{0.6102} & \textbf{0.5586} & \textbf{0.5176} \\
          &\kmeanspp & 3.37e+4 & 8.38e+5 & 4.95e+6 & 0.3660 & 0.3676 &  0.1787\\        
          & \kmeanspar& 3.30e+4 & 8.18e+5 & 4.19e+6 & 0.2921 & 0.3641 & 0.1552\\
          & \rand & 3.37e+4 & 8.85e+5 & 1.06e+7 & 0.0698 & 0.5076 & 0.0374\\ 
  \hline
  \multirow{3}{*}{\kddfull}
          & \bg & 1.83e+5 & \textbf{7.38e+6} & \textbf{3.54e+7} & \textbf{0.7754} & \textbf{0.5992} & \textbf{0.5803}\\
          & \kmeanspp & 1.83e+5 & 8.21e+6 & 4.65e+7 & 0.2188 &  0.2828 & 0.1439\\
          & \kmeanspar & \multicolumn{6}{c}{does not stop after $8$ hours}\\
          & \rand & 1.83e+5 & 9.60e+6 & 1.11e+8 & 0.0378691 & 0.6115 &  0.0241\\
  \hline                                                          
\end{tabular}
\caption{Clustering quality. $k = 3$, $t = 8752$ for \kddsample\ and $t = 45747$ for \kddfull}
\label{tb:kdd}
\end{table*}

\ifdefined\submission
\else
Table \ref{tb:susy} presents the experimental results for \susy-$\Delta$ dataset. We can observe that \bg\ produces slightly better results than \kmeanspar, \kmeanspp\ and \rand\ in \lone\ and \ltwo. For outlier detection, \bg\ outperforms \kmeanspp\ and \kmeanspar\ significantly in terms of \prerecall\ and \recall, while \kmeanspar\ gives slightly better \precision. Table \ref{tb:spatial} presents the results for \spatial-$15$ dataset, and \bg\ again outperforms all other baseline algorithms in all metrics.

\begin{table*}[t]
\centering
\begin{tabular}{|l|l|l|l|l|l|l|l|}
  \hline
  dataset & {\tt algo} & \summarysize  &  \lone  & \ltwo &  \prerecall & \precision & \recall \\
  \hline
  \multirow{3}{*}{\susy-$5$}
          &\bg   & 2.40e+4 & \textbf{1.10e+7} & \textbf{2.76e+7} & \textbf{0.7508} & 0.6059 & \textbf{0.5933}\\
          &\kmeanspp & 2.40e+4 & 1.11e+7& 2.79e+7 & 0.1053 & 0.5678 & 0.1047\\
          &\kmeanspar & 2.50e+4 & 1.11e+7 & 2.77e+7 & 0.1735 & \textbf{0.7877} & 0.1609 \\
          &\rand & 2.40e+4 & 1.12e+7 & 2.84e+7 & 0.004 &  0.2080 & 0.004\\
  \hline
  \multirow{3}{*}{\susy-$10$}
          & \bg & 2.40e+4 & 1.11e+7 &  \textbf{2.77e+7} & \textbf{0.9987} & 0.9558 &  \textbf{0.9542}\\
          & \kmeanspp & 2.40e+4 & 1.11e+7 & 2.90e+7 & 0.3412 & 0.8602 & 0.3412\\
          & \kmeanspar & 2.49e+4 & 1.11e+7 & 2.84e+7 & 0.4832 & \textbf{0.9801} & 0.4823\\
          & \rand & 2.40e+4 &  1.12e+7 & 3.08e+7 & 0.0047 & 0.2481 & 0.0047\\
  \hline                                                          
\end{tabular}
\caption{Clustering quality on \susy\ dataset, $k=100$, $t=5000$}
\label{tb:susy}
\end{table*}

\begin{table*}[h]
\centering
\begin{tabular}{|l|l|l|l|l|l|l|l|}
  \hline
  dataset & {\tt algo} & \summarysize  &  \lone  & \ltwo &  \prerecall & \precision & \recall \\
  \hline
  \multirow{3}{*}{\spatial-$15$}
          &\bg   & 1.80e+3 & \textbf{5.21e+5} & \textbf{7.19e+5} & \textbf{0.9993} & \textbf{0.9993} & \textbf{0.9993}\\
          &\kmeanspp & 1.80e+3 & 5.30e+5& 7.79e+5 & 0.7698 & 0.9954 & 0.7697\\
          &\kmeanspar & 1.80e+3 & 5.28e+5 & 7.38e+5 & 0.9198 & 0.9986 & 0.9198 \\
          &\rand & 1.80e+3 & 5.35e+5 & 1.03e+6 & 0.0047 &  0.2105 & 0.0047\\
  \hline
\end{tabular}
\caption{Clustering quality on \spatial\ dataset, $k=5$, $t=400$}
\label{tb:spatial}
\end{table*}
\fi

\subsubsection{Communication Costs}
\begin{figure}[t]
  \centering
  \subfloat[][communication cost]{\includegraphics[width=4.8cm,height=3.5cm]{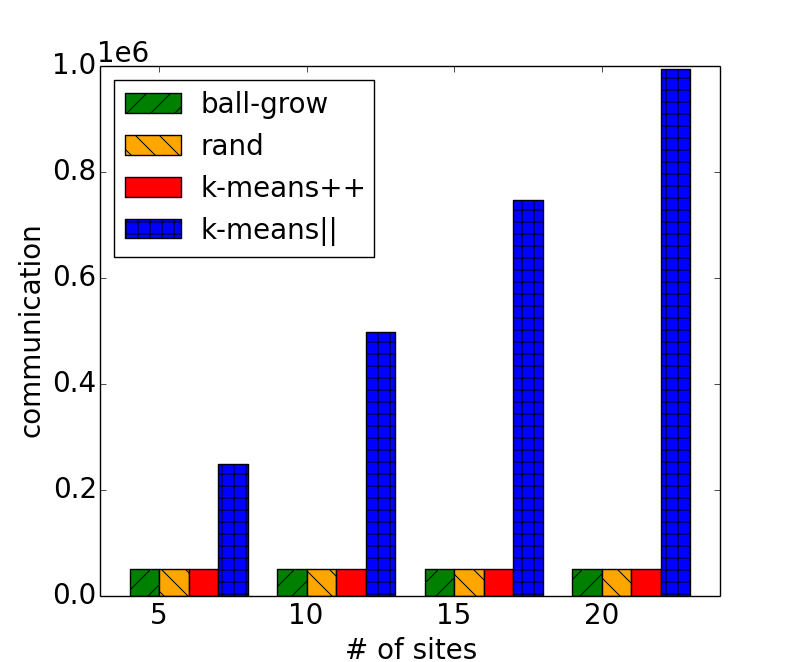}\label{fig:com}}
  \subfloat[][running time ($\log_{10}$ scale)]{\includegraphics[width=4.8cm,height=3.5cm]{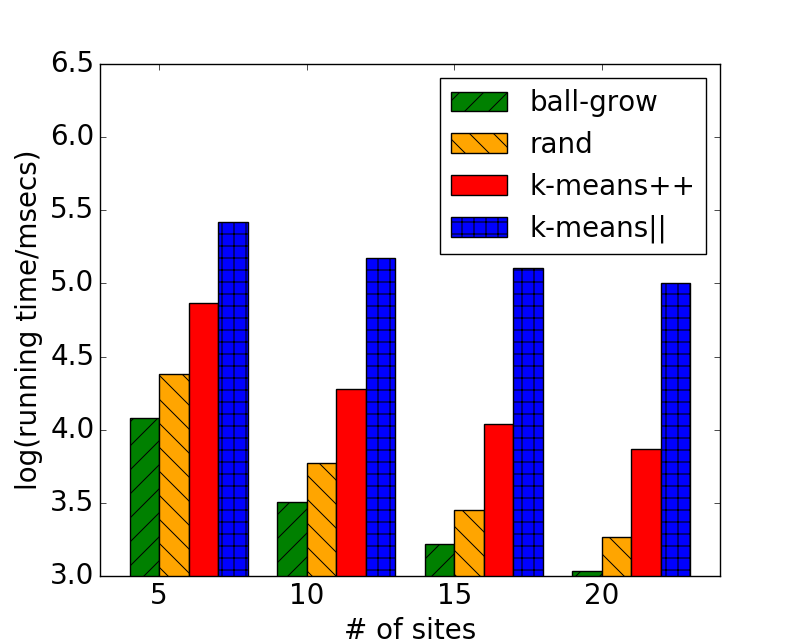}\label{fig:time}}
  \subfloat[][running time, $\#\text{sites}=20$]{\includegraphics[width=4.8cm,height=3.5cm]{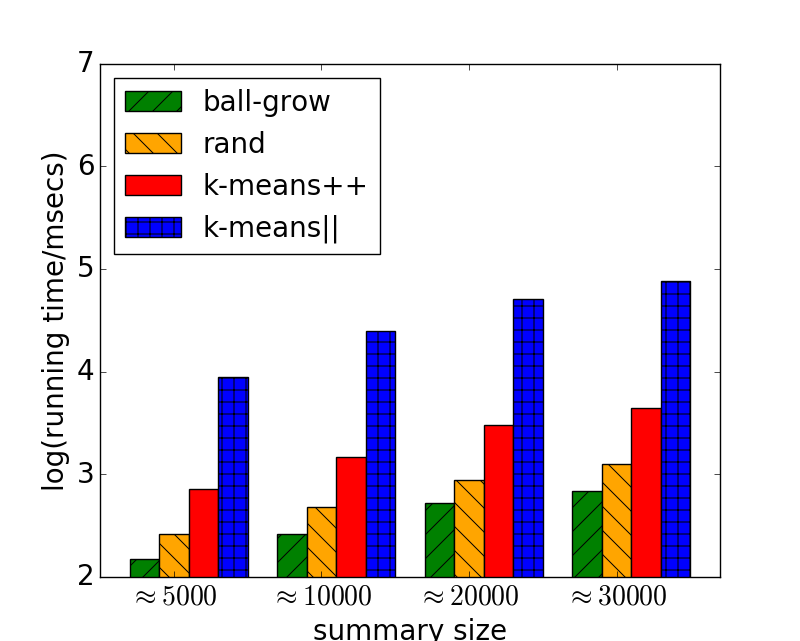}\label{fig:time-size}}
  \caption{experiments on \kddsample\ dataset}
  \label{fig:xx}
\vspace{-0.5cm}
\end{figure}
We next compare the communication cost of different algorithms. Figure \ref{fig:com} presents the experimental results. The communication cost is measured by the number of points exchanged between the coordinator and all sites. In this set of experiments we only change the number of partitions (i.e., \# of sites $s$). The summaries returned by all algorithms have almost the same size.

We observe that the communication costs of \bg, \kmeanspp\ and \rand\ are almost independent of the number of sites. Indeed, \bg, \kmeanspp\ and \rand\ all run in one round and their communication cost is simply the size of the union of the $s$ summaries. \kmeanspar\ incurs significantly more communication, and it grows almost linearly to the number of sites. This is because \kmeanspar\ grows its summary in multiple rounds; in each round, the coordinator needs to collect messages from all sites and broadcasts the union of those messages. When there are $20$ sites, \kmeanspar\ incurs $20$ times more communication cost than its competitors.

\subsubsection{Running Time}

We finally compare the running time of different algorithms. All experiments in this part are conducted on \kddsample\ dataset since \kmeanspar\ does not scale to \kddfull; similar results can also be observed on other datasets. The running time we show is only the time used to construct the input (i.e., the union of the $s$ summaries) for the second level clustering, and we do not include the running time of the second level clustering since it is always the same for all tested algorithms (i.e., the \kmeansmm).

Figure \ref{fig:time} shows the running time when we change the number of sites while fix the size of the summary produced by each site.  We observe that \kmeanspar\ uses significantly more time than \bg, \kmeanspp\ and \rand. This is predictable because \kmeanspar\ runs in multiple rounds and communicates more than its competitors.  \bg\ uses significantly less time than others, typically $1/25$ of \kmeanspar, $1/7$ of \kmeanspp\ and $1/2$ of \rand. The reason that \bg\ is even faster than \rand\ is that \bg\ only needs to compute weights for about half of the points in the constructed summary.
 As can be predicted, when we increase the number of sites, the total running time of each algorithm decreases.



We also investigate how the size of the summary will affect the running time. Note that for \bg\ the summary size is controlled by the parameter $t$. We fix $k=3$ and vary $t$, resulting different summary sizes for \bg. For other algorithms, we tune the parameters so that they output summaries of similar sizes as \bg\ outputs. Figure \ref{fig:time-size} shows that when the size of summary increases, the running time increases almost linearly for all algorithms. 

\ifdefined\submission
\else

\subsubsection{Stability of The Experimental Results}
Our experiments involve some randomness and we have already averaged the experimental results for multiple runs to reduce the variance. To show that the experimental results are reasonably stable, we add Table \ref{tb:kdd-stable} to present the standard deviations of the results. For each metric of a given algorithm, we gather $5$ data points,  each of which is the averaged result of 10 runs. We then calculate the mean/stddev of the $5$ data points.

\begin{table*}[h]
\centering
\begin{tabular}{|l|l|l|l|l|l|}
  \hline
   {\tt algo} &  \lone  & \ltwo &  \prerecall & \precision & \recall \\
  \hline
          \bg &  $8.16\E5 \pm 1.1\E4$ & $3.46\E6\pm 4.1\E5$ & $0.61\pm 0.002$ & $0.55 \pm 0.007$ & $0.52\pm 0.004$ \\
          \kmeanspp & $8.83\E5 \pm 6.9\E 4$ & $5.11\E6 \pm 2.8\E5$ & $0.37\pm 0.004 $ & $0.36 \pm 0.004$ &  $0.18 \pm 0.002$\\        
           \kmeanspar & $8.41\E5 \pm 5.0 \E4$ & $4.19\E6 \pm 1.4\E5$ & $0.29\pm 0.004$ & $0.36\pm 0.005$ & $0.16 \pm 0.004$\\
           \rand  & $9.20\E5 \pm 5.9\E4$ & $1.08\E7 \pm 2.0\E5$ & $0.07\pm 0.001$ & $0.49\pm 0.009$ & $0.04\pm 0.005$\\ 
  \hline                                                          
\end{tabular}
\caption{Clustering quality on \kddsample\, $k = 3$, $t = 8752$. Each entry is in the format of mean$\pm$stddev.}
\label{tb:kdd-stable}
\end{table*}

It can be seen from Table \ref{tb:kdd-stable} that the results of our experiments are very stable in almost all metrics. \lone\ is the only metric where our algorithm has some overlap with other baseline algorithms, but it is still safe to  conclude that our algorithm outperforms all the baselines in almost all metrics. The similar stability is observed on other datasets.

\subsubsection{Summary}
We observe that \bg\ gives the best performance in almost all metrics for measuring summary quality. \kmeanspar\ slightly outperforms \kmeanspp. \rand\ fails completely in the task of outliers detection.   For communication, \bg, \kmeanspp\ and \rand\ incur similar costs and are independent of the number of sites. \kmeanspar\ communicates significantly more than others. For running time, \bg\ runs much faster than others, while \kmeanspar\ cannot scale to large-scale datasets. 
\fi

%

\subsubsection*{Acknowledgments}
Jiecao Chen, Erfan Sadeqi Azer and Qin Zhang are supported in part by
NSF CCF-1525024 and IIS-1633215.

\ifdefined\submission
\else
\fi

\bibliography{paper}
\bibliographystyle{natbib}
\end{document}